\newcommand{\set}[1]{\left\{ #1 \right\}}
\newcommand{\setdef}[2]{\set{ #1 \, : \, #2}}
\newcommand{\struc}{\text{STRUC}}
\newcommand{\md}{\text{MOD}}
\newcommand{\tup}[1]{\langle #1 \rangle}
\newcommand{\cc}{\text{\sc 2CC}}
\newcommand{\sat}{\text{\sc Sat}}
\newcommand{\efsat}{\text{\sc QSat}_2}
\newcommand{\eunique}{\exists \exists!\text{\sc Sat}}
\newcommand{\eftsat}{\text{\sc Q3Sat}_2}
\newcommand{\eunsat}{\text{\sc QUnsat}_2}
\newcommand{\vcsat}{\text{\sc VCSat}}
\newcommand{\Sp}{\Sigma_2^{p}}
\newcommand{\sigst}{\mathcal{A}}
\newcommand{\sigfnc}{\sigma_{\text{cnf}}}
\newcommand{\sigfnd}{\sigma_{\text{dnf}}}
\newcommand{\fnc}{\text{\rm CNF}}
\newcommand{\tfnd}{3\text{DNF}}
\newcommand{\fnd}{\text{DNF}}
\newcommand{\bin}{\text{bin}}
\newcommand{\tr}{\text{\bf true}}
\newcommand{\suc}{\text{suc}}
\begin{document}

\title{A syntactic tool for proving hardness\\ in the Second Level of the \\ 
Polynomial-Time Hierarchy}
\titlerunning{Descriptive Complexity}  % abbreviated title (for running head)
%                                     also used for the TOC unless
%                                     \toctitle is used
%
\author{Edwin Pin\inst{1} \and Nerio Borges\inst{2}}
\authorrunning{Ivar Ekeland et al.} % abbreviated author list (for running head)
%
%%%% list of authors for the TOC (use if author list has to be modified)
\tocauthor{Ivar Ekeland, Roger Temam, Jeffrey Dean, David Grove,
Craig Chambers, Kim B. Bruce, and Elisa Bertino}
\institute{Universidad Central de Venezuela,\\
Departamento de Matematica, \\
\email{edwin.pin@ciens.ucv.ve}\\ 
\and
Universidad Simon Bolivar,\\
Departamento de Matematicas Puras y Aplicadas,\\
\email{nborges@usb.ve}}

\maketitle              % typeset the title of the contribution

\begin{abstract}
In \cite{medina}, Immerman and Medina initiated the search for syntactic tools to prove \textbf{NP}-completeness. In their work, amongst several results, they conjecture that the \textbf{NP}-completeness of a problem defined by the conjunction of a sentence in Existential Second Order Logic with a First Order sentence, necessarily imply the \textbf{NP}-completeness of the problem defined by the Existential Second Order sentence alone. This is interesting because if true it would justify the \textit{restriction} heuristic proposed in \cite{garey} which roughly says that in some cases one can prove \textbf{NP}-complete a problem $A$ by proving \textbf{NP}-complete a problem $B$ contained in $A$. 

Borges and Bonet \cite{borges,borges2,borges3} extend some results from Immerman and Me\-di\-na and they also prove for a host of complexity classes that the Immerman-Medina conjecture is true when the First Order sentence in the conjunction is universal \cite{borges3}. Our work extends that result to the Second Level of the Polynomial-Time Hierarchy.
%\keywords{descriptive complexity, mathematical logic, graph theory}
\end{abstract}
\section{Introduction}
The concept of \textit{completeness} in a 
complexity class is one of the most relevants in Computational Complexity theory. 
The formulation of this concept, due to a result by Stephen Cook, 
led to the proposal of the quite famous open problem \textbf{P} versus \textbf{NP}. 
Cook proved that any \textbf{NP} problem can be efficiently reduced to the 
Boolean Satisfiability Problem commonly denoted as SAT \cite{cook}. 
The SAT problem is only the first of an extensive list of decision problems classified as 
\textbf{NP}-complete \cite{garey}. 
The major importance of this classification lies on the suspicion that 
the apparent contrast between \textbf{P} and \textbf{NP} 
is mostly due to the existence of \textbf{NP}-complete problems.

Another prominent result, proved by Ron Fagin \cite{fagin}, 
establishes that there is no need for a computational model 
(such as nondeterministic Turing machines) to define the complexity class \textbf{NP} 
but, instead, it can be defined using expressive resources provided by an appropriate language. 
Specifically, the set $\exists\textrm{SO}$ of existential-second-order sentences 
\textit{captures} \textbf{NP}, which means that every \textbf{NP} problem can be defined by a 
sentence in $\exists\textrm{SO}$ and every problem defined by a sentence in 
$\exists\textrm{SO}$ belongs to \textbf{NP} \cite{fagin}. 
Stockmeyer generalized this by defining new complexity classes whose union, 
known as the Polynomial-Time Hierarchy, is captured by second order logic 
\cite{stockmeyer}. Fagin's and Stockmeyer's results are fundamental to
Descriptive Complexity Theory. 
Immerman compiled in \cite{immerman} syntactic characterizations of most well-known 
complexity classes. 

In this line of research, Antonio Medina and Neil Immerman initiated the search for syntactic tools to prove \textbf{NP}-completeness \cite{medina}. In their work they conjecture that the \textbf{NP}-completeness of a problem defined by a sentence in the form $\Phi\land\varphi$ with $\Phi$ in $\exists\textrm{SO}$ and $\varphi$ in FO necessarily imply the \textbf{NP}-completeness of the problem defined by $\Phi$ alone. This is interesting because if true it would justify some instances of the \textit{restriction} heuristic proposed in \cite{garey} which roughly says that in some cases one can prove \textbf{NP}-complete a problem $A$ by proving \textbf{NP}-complete a problem $B$ contained in $A$. 

Borges and Bonet \cite{borges,borges2,borges3} extend some results from Immerman and Medina and they also prove for a host of complexity classes that the Immerman-Medina conjecture is true when $\varphi$ is a universal First Order sentence \cite{borges3}. Our work extends that result to the Second Level of the Polynomial-Time Hierarchy.

This paper is organized as follows: Section \ref{sec:prelim} overviews basic definitions, some of them from computational complexity theory but within the descriptive context. In Section \ref{sec:CompleteProblems} we will prove that most of the problems mentioned in the Appendix are complete under first-order projections, which is a necessary condition for the study of other concepts presented afterwards. Section \ref{sec:FurtherConcepts} introduces the fundamental ideas of this work, concepts and results that were fully developed in \cite{borges3}, superfluity included. In Section \ref{sec:main} we present the main results to show that superfluity is valid in the Second Level of the Polynomial-Time Hierarchy along with an application. We conclude the paper discussing how this investigation may continue in Section \ref{sec:Conclusions}.

\section{Preliminaries}\label{sec:prelim}
We shall first consider the descriptive approach to our objects of study. 
Although this is not the way it is commonly done, 
we are going to introduce complexity classes syntactically. 
Other concepts, as reducibility and completeness, 
will be understood in the context of Descriptive Complexity.

This section is included to keep this paper as self-contained as possible.
We follow closely the exposition in \cite{immerman2}
and most of its notation.

\subsection{Vocabularies and Languages}
Since our approach to computational complexity is descriptive,
we are going to see decision problems in terms of mathematical logic.
%To talk about any subject, we need first to define the language
%we will employ. Being logic a formal language,
%we need to state precisely the alphabet of such
%a language.

%The basic alphabet of any logical language is given by 
%logical connectives $\neg$, $\land$, $\lor$, $\rightarrow$,
%quantifiers $\forall,\exists$ and a set of variables $x_1,x_2,x_3,\ldots$.
%Our alphabet includes the equality symbol $=$.
%We also need some special extralogical symbols depending on the problem
%we want to describe. A set of such symbols is called
%a vocabulary. We restrict ourselves to relational vocabularies 
%i.e. vocabularies with no function symbols.

A \textit{relational vocabulary} 
is a tuple $\sigma = \langle R_1^{a_1},\ldots,R_r^{a_r}, c_1,\ldots, c_s \rangle$, 
where each $R_i$ is a \textit{relation symbol} 
with an associated positive integer $a_i$ called its \textit{arity}  
and each $c_j$ is a {\em constant symbol}. 
We do not consider function symbols but
this choice implies no loss of expressive power 
since functions can be defined as relations as well.
We also suppose that every vocabulary includes 
the so called
\textit{numeric} relation and constant symbols 
$\leq$, $\mbox{BIT}$, $\mbox{PLUS}$, $\mbox{TIMES}$, $\mbox{SUC}$, $0$, $1$, $\max$.

A $\sigma$-\textit{structure}, or simply a \textit{structure} 
if $\sigma$ is clear from context, 
is a tuple 
$\mathcal{A} = \langle |\mathcal{A}|,R_1^{\mathcal{A}},\ldots,R_r^{\mathcal{A}}, 
c_1^{\mathcal{A}},\ldots, c_s^{\mathcal{A}} \rangle$, 
where
\begin{itemize}
\item $|\mathcal{A}|$ is a nonempty set, called the \textit{universe} of $\mathcal{A}$,
\item $R_i^{\mathcal{A}}$ is an $a_i$-ary relation over $A$, that is, $R_i^{\mathcal{A}} \subseteq A^{a_i}$, and
\item $c_j^{\mathcal{A}}$ is an element of the universe.
\end{itemize}
A $\sigma$-structure provides interpretations for the symbols in $\sigma$.
The set of all finite $\sigma$-structures is denoted as $\textrm{STRUC}[\sigma]$.
When $\sigma$ contains only the relation symbol $R$ and no constant symbols
 we will denote $\textrm{STRUC}[\sigma]$ as
$\textrm{STRUC}[R]$. 

In this paper structures will represent instances of decision problems
hence we are going to adopt some conventions:
Every structure $\mathcal{A}$ has finite universe,
thus we say $\mathcal{A}$ is a \textit{finite} structure. 
The cardinality of $\mathcal{A}$ is denoted by $||\mathcal{A}||$. 
Since decision problems are closed under isomorphisms
we will assume that the universe of every structure $\mathcal A$ 
with cardinality $n>1$ is 
$|\mathcal A|=\{ 0,1,\ldots,n-1 \}$.
For short, we denote the set $\{ 0,1,\ldots,n-1 \}$ as $n$. 
The numeric relation and constant symbols are given their standard interpretations (see \cite{immerman2}).

We are going to consider formulas in first order logic and second order logic 
(for a detailed account, we refer the reader to \cite{immerman2}). 
A \textit{numeric} formula is a formula with only numeric symbols. 
A \textit{sentence} is a formula with no free variables. 
First order logic and second order logic are denoted by 
$\textrm{FO}$ and $\textrm{SO}$, respectively. 
Throughout this paper, we are going to consider restrictions of both logical languages.
For instance, the already mentioned $\exists\textrm{SO}$ which captures \textbf{NP}, 
or the set $\forall\textrm{FO}$ of universally-quantified-first-order sentences. 
We are going to refer to this kind of restrictions as logical languages also. 
When the vocabulary is worth to mention, we will write $\mathcal{L}[\sigma]$, 
where $\mathcal{L}$ is a logical language. 
% Two further conditions: we require that every logical language 
%must be closed under disjunction and closed under conjunction with first order formulas. 

When a $\sigma$-structure $\mathcal{A}$ satisfies a sentence $\varphi$ in $\mathcal{L}[\sigma]$,
we write $\mathcal{A} \models \varphi$. 
The set of all finite structures that satisfy $\varphi$ is denoted by 
$\textrm{MOD}[\varphi]$. 
A sentence $\varphi\in\mathcal{L}[\sigma]$ \textit{defines}
the decision problem $\Omega$ iff $\Omega=\textrm{MOD}[\varphi]$.
% Whether a structure belongs to $\textrm{MOD}[\varphi]$ or not is what we call a 
% \textit{decision problem}. 
%
\subsection{Complexity Classes}
Let $\mathcal{L}$ be a logical language closed under disjunctions 
and closed under conjunctions with first-order formulas. 
The complexity class $\mathbf{C}$ captured by $\mathcal{L}$ 
is the set of all decision problems defined by 
sentences in $\mathcal{L}$ i.e. 
\begin{equation}
 \mathbf{C} := \{ \textrm{MOD}[\varphi] : \varphi \in \mathcal{L} \}.
 \label{eq:one}
\end{equation}

This notion of complexity classes follows from \cite{borges}, 
in which $\mathbf{C}$ is asked to be \textit{nice} 
\footnote{nice \cite{allender}, also known as 
\textit{syntactic} classes in \cite{papadimitriou},
are complexity classes having a ``universal'' complete language
i.e. a complete language
$\mathcal U\subset\set{0,1}^*$ such that $w\in U$ iff
it codifies a pair $(M,x)$ where $M$ is a TM
accepting input $x$ within the resource bounds defining
C.} 
closed under finite unions and also dependent 
on a family of proper complexity functions \cite{papadimitriou}. 
All the complexity classes mentioned throughout this paper satisfy these three conditions. 

Let $\textrm{SO}_k$ be the language of 
$\textrm{SO}$ sentences with at most $k$ alternations of quantifiers, 
starting with an existential one. So, for every natural number $k$,
$\textrm{SO}_k$ consists of all second order sentences with the form
\begin{equation}
\underbrace{\exists \vec R_1\forall\vec R_2\ldots \mathcal Q_k\vec R_k}
_{ k \textrm{ \scriptsize{quantifiers} }}\varphi
\end{equation}
where $\mathcal Q_k$ is existential if $k$ is odd and universal if $k$ is even,
each $\vec R_j$ is a tuple of relation variables, and $\varphi$
is a first order sentence.

% \begin{equation}
%   \textrm{SO}\mathcal{Q}_k = \textrm{SO}\underbrace{\exists\forall\exists \cdots \mathcal{Q}}_{ k \textrm{ \scriptsize{quantifiers} }}
%   \label{eq:soq2}
% \end{equation}
% where the last $\mathcal{Q}$ is $\exists$ if $k$ is odd, or $\forall$ if $k$ is even.
We are going to pay special attention to the case $k=2$ i.e.
the language $\textrm{SO}_2$ of sentences with the form 
$\exists \vec R_1\forall\vec R_2\varphi$. 

The Polynomial-Time Hierarchy is defined by levels as follows: 
the level $0$ is \textbf{P} captured by SO-Horn \cite{immerman2} and is denoted by $\Sigma_0^p$. 
For $k \geq 1$, the kth level is
\begin{equation}
  \Sigma_k^p := \{ \textrm{MOD}[\varphi] : \varphi \in \textrm{SO}_k \}.
\end{equation}

The complementary class of $\Sigma_k^p$ is denoted by $\Pi_k^p$ i.e.
\begin{equation}
  \Pi_k^p := \{ \textrm{MOD}[\neg\varphi] : \varphi \in \textrm{SO}_k \}.
\end{equation}

As has been already stated, the first level agrees with the class \textbf{NP}. 
The Polynomial-Time Hierarchy is defined as the class
\begin{equation}
\mathbf{PH} := \bigcup_{k \geq 0} \Sigma_k^p=\bigcup_{k \geq 0} \Pi_k^p,
\end{equation}
which is captured by SO. 
 
To look at some examples of problems in $\mathbf{PH}$ see Appendix \ref{app:problems}.
Other problems in $\mathbf{PH}$ can be checked on \cite{papadimitriou}.

\subsection{Reductions and First-Order Queries}

We also need a precise syntactical notion 
for another important computational concept: reducibility. 
Let $A$ and $B$ be decision problems. 
Informally, $A$ is \textit{reducible to} $B$ if there is an easily
computable map $f$ from instances of $A$ to instances of $B$ such that
$x\in A\iff f(x)\in B$. Such a function together with a TM $M_B$ deciding $B$ 
yields an algorithm $M_A$ that decides $A$ and is not 
significantly harder than $M_B$,
thus we can conclude that $A$ is as hard to compute as $B$. 
Notice that the requirement of efficiency imposed over $f$ is quite important.

Formally,
let $\tau$ and $\sigma$ be two vocabularies with 
$\sigma = \langle R_1^{a_1},\ldots,R_r^{a_r}, c_1,\ldots, c_s \rangle$. 
Let $k$ be a positive integer and consider the tuple 
$I = \langle \varphi_0,\ldots,\varphi_r, \psi_1,\ldots, \psi_s \rangle$
of formulas in $\textrm{FO}[\tau]$
where $\varphi_0$ has arity $k$,
each $\varphi_j$ with $1\leq j\leq r$ has arity $ka_i$
and each $\psi_j$ with $1\leq j\leq s$ has arity $k$. 
$I$ defines a map 
\begin{equation}
\textrm{STRUC}[\tau] \longrightarrow \textrm{STRUC}[\sigma]
\end{equation}
that takes every $\tau$-structure $\mathcal{A}$, 
to a $\sigma$-structure $I(\mathcal{A})$ given by the tuple
\begin{equation}
\langle |I(\mathcal{A})|,R_1^{I(\mathcal{A})},\ldots,R_r^{I(\mathcal{A})},
 c_1^{I(\mathcal{A})},\ldots, c_s^{I(\mathcal{A})} \rangle,
\end{equation}
where
\begin{itemize}
\item $|I(\mathcal{A})|$ is the subset of $|\mathcal{A}|^k$
	defined by $\varphi_0(x_1,\ldots, x_k)$,
\begin{equation}
|I(\mathcal{A})| = \left\{ (b^1,\ldots,b^k)\in |\mathcal{A}|^k :
 \mathcal{A} \models \varphi_0(b^1,\ldots,b^k) \right\}.
\end{equation}
\item $R_i^{I(\mathcal{A})}$ is the subset of
 $|I(\mathcal{A})|^{a_i}$ defined by $\varphi_i$,
\begin{equation}
R_i^{I(\mathcal{A})} = \left\{ (\vec{b}_1, 
\ldots, \vec{b}_{a_i}) \in |I(\mathcal{A})|^{a_i} 
: \mathcal{A} \models \varphi_i(\vec{b}_1, \ldots, \vec{b}_{a_i}) \right\}.
\end{equation}
\item $c_j^{I(\mathcal{A})}$ 
is the only element of $|I(\mathcal{A})|$ satisfying $\psi_j$ i.e.
the only
$\vec{b} \in |I(\mathcal{A})|$ such that $\mathcal{A} \models \psi_j(\vec{b})$.
\end{itemize}

We call $I$ a $k$-\textit{ary first-order query} from $\textrm{STRUC}[\tau]$ 
to $\textrm{STRUC}[\sigma]$. 
Let's suppose that $A \subseteq \textrm{STRUC}[\tau]$ and 
$B \subseteq \textrm{STRUC}[\sigma]$. 
$I$ is a \textit{first-order reduction} from $A$ to $B$ if for every 
$\tau$-structure $\mathcal{A}$,
$$\mathcal{A} \in A \quad \iff \quad I(\mathcal{A}) \in B.$$

First-order reductions are quite interesting.
For a general treatment of their properties we refer the reader to \cite{allender}. 
A first-order query is called a \textit{first-order projection} 
(or \textit{fop}) if $\varphi_0$ is numeric and each 
$\varphi_i$ and $\psi_j$ is a first order formula in the form
\begin{equation}
\alpha_0 (\vec x) \vee (\alpha_1(\vec x) \wedge \lambda_1(\vec x)) \vee \cdots 
\vee (\alpha_{\ell}(\vec x) \wedge \lambda_{\ell}(\vec x)) 
\end{equation}
where
\begin{itemize} 
\item The $\alpha_k$'s are numeric and mutually exclusive
i.e. if $\mathcal A$ is a structure and $\vec u$ is a tuple of
elements from $|\mathcal A|$ with the appropriate length,
then $\mathcal A\models \alpha_j(\vec u)\Rightarrow
A\not\models \alpha_i(\vec u)$ for every $i\neq j$.
\item The $\lambda_k$'s are $\tau$-literals.
\end{itemize} 
Unless otherwise stated, our reductions are fops.  
We denote by $A \leq_{fop} B$ the fact that problem $A$ is reducible
to problem $B$. 
The binary relation $\leq_{fop}$ is transitive and reflexive 
thus it is a quasi order. It is not an order because
it is not antisymmetric.

Our idea of completeness in a complexity class depends on our notion of reduction.
A problem $B$ is \textit{hard} (via fops) in the complexity class $\mathbf{C}$ or $\mathbf{C}$-\textit{hard}
if $A \leq_{fop} B$ for every problem $A\in\mathbf{C}$.
We say that $B$ is $\mathbf{C}$-\textit{complete} (via fops) if $B\in\mathbf{C}$ and it is
$\mathbf{C}$-\textit{hard}.

There are other kinds of reductions e.g. polynomial time reductions
and log-space reductions and the corresponding completeness notions
in the different complexity classes. 
Clearly if we are working within a given complexity class,
the reductions allowed must not be more difficult than
the problems in the class. Thus when discussing completeness
in {\bf L} or {\bf NL}, for instance, we can not use polynomial time reductions.
Sometimes we will refer to hardness or completeness using other reductions than fops, but we are going to make it explicit.

A major reference in completeness via polynomial reductions in the second level of the Polynomial Hierarchy,
is the The compendium by Schaefer and Umans \cite{schaefer}, where several complete (via poly-reductions) problems
are listed. 
 
It is known that $\efsat$ is $\Sp$-complete 
via log-space reductions \cite{stockmeyer}. 
It is also known that SAT is complete via fops \cite{immerman2}, 
and an analogous construction can be considered to prove that $\efsat$ is $\Sp$-complete 
via fops. 
In \cite{marx,marx2}, it is proved that $\eunique$ and $2\textrm{CC}$ are $\Sp$-complete 
by reducing $\efsat$ to it.
 Those same reductions can be adapted to be fops. 
In \cite{berit} it is proved that $\vcsat$ and many other 
 value-and-cost problems are $\Sp$-complete for other reductions which are not fops. We are going to study this in detail in the next section.

\section{Some complete problems in $\Sp$}\label{sec:CompleteProblems}

This section is devoted to prove the following theorem. We will break down its demonstration into several propositions. 

\begin{theorem}\label{theorem-resumen}
The following problems are $\Sp$-complete:
\begin{enumerate}
	\item $\efsat$
    \item $\eunsat$
    \item $\eunique$
    %\item $\victory$
    \item $\cc$
\end{enumerate}
\end{theorem}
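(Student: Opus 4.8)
The plan is to prove Theorem~\ref{theorem-resumen} by establishing each of the four problems is in $\Sp$ and is $\Sp$-hard via fops, leaning on the facts already quoted in the preliminaries: that $\efsat$ is $\Sp$-complete via log-space reductions \cite{stockmeyer}, that the standard Cook--Levin style construction for $\sat$ being complete via fops \cite{immerman2} adapts to show $\efsat$ is $\Sp$-complete via fops, and that the reductions of \cite{marx,marx2} witnessing $\Sp$-hardness of $\eunique$ and $\cc$ can be recast as fops. So the first proposition would simply record that $\efsat$ is $\Sp$-complete via fops, sketching the adaptation of the $\sat$-is-fop-complete argument: given an arbitrary $\Sp$ problem as $\md[\exists\vec R_1\forall\vec R_2\,\varphi]$, the tableau/formula encoding of a $\Sigma_2$-alternating machine computation is first-order definable in the input structure with the numeric predicates, and the two blocks of second-order quantifiers map to the two blocks of Boolean variables of the quantified formula; the only things to check are that $\varphi_0$ is numeric and that each relation/constant formula has the required projective $\bigvee_k(\alpha_k\wedge\lambda_k)$ shape, which is routine once the clause structure is laid out bit by bit.

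Next I would handle membership. Each of the four problems is visibly in $\Sp$: $\efsat$ and $\eftsat$-style problems are the canonical $\Sigma_2^p$ problems ($\exists$ a truth assignment to the first block such that $\forall$ assignments to the second block the matrix holds); $\eunique$ asks $\exists$ a partial assignment such that the resulting formula has a unique satisfying assignment, and ``has a unique satisfying assignment'' is a $\Pi_1^p$-style condition (there exists one — absorbed into the outer $\exists$ — and all others fail), so the whole thing sits in $\Sigma_2^p$; $\cc$ (the $2$-colorability / connected-component style problem of \cite{marx2}) is in $\Sp$ by its definition there. For $\eunsat$ one observes it is $\exists\vec R$ such that the resulting formula (or the complementary structure) is unsatisfiable, i.e.\ $\exists\forall$, hence in $\Sp$. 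These membership arguments are short and I would state them as one-line observations or fold them into each proposition.

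Then comes the hardness half, which is the substance. I would give a chain of fop-reductions rooted at $\efsat$:
\begin{itemize}
\item $\efsat \leq_{fop} \eunsat$: negate the inner universally-quantified matrix appropriately and pad, so that ``$\forall\vec R_2\,\varphi$'' becomes ``the formula on $\vec R_2$ is unsatisfiable''; this is essentially a De~Morgan rewriting of the $3$CNF/$3$DNF matrix and is clearly projective.
\item $\efsat \leq_{fop} \eunique$ and $\efsat \leq_{fop} \cc$: take the reductions from \cite{marx} and \cite{marx2}, which are given as polynomial-time (or log-space) many-one reductions, and verify they are first-order projections. This means re-examining their gadget constructions and checking (a) the universe of the output is a numeric FO-definable subset of a fixed power of the input universe, and (b) each relation of the output is defined by a formula of the form $\alpha_0(\vec x)\vee\bigvee_{k\geq 1}(\alpha_k(\vec x)\wedge\lambda_k(\vec x))$ with numeric mutually exclusive $\alpha_k$ and input-literals $\lambda_k$. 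The standard Valiant--Vazirani-flavoured and component gadgets used there are local and uniform, so this should go through, but making the bookkeeping explicit — especially the numeric arithmetic needed to index the gadget copies — is where the care is needed.
\end{itemize}

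The main obstacle I expect is precisely that last verification: turning the known poly-time or log-space hardness reductions for $\eunique$ and $\cc$ into bona fide first-order projections, since fops are a much more rigid format (no iteration, no counting beyond what the numeric predicates $\leq,\mathrm{BIT},\mathrm{PLUS},\mathrm{TIMES},\mathrm{SUC}$ give, and each output bit must depend on at most one input bit). If some gadget in \cite{marx,marx2} is not literally projective, the fallback is to first reduce $\efsat$ to a convenient intermediate $\Sp$-complete-via-fop problem (for instance a normalized $\eftsat$) and then exhibit a hand-crafted projection from there; alternatively one uses the closure of $\leq_{fop}$ under composition (transitivity, already noted) to break a complicated reduction into simple projective stages. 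I would organize the section as four propositions — one per problem — with $\efsat$ done first and fully, and the other three citing it as the source of hardness.
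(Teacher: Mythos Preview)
Your plan is essentially the paper's: it too dismisses membership as known, proves $\efsat$ $\Sp$-hard first by the $\sat$-style fop construction, then handles $\eunsat$ by the De~Morgan swap $Q\!\leftrightarrow\!N$, and verifies that Marx's gadget reductions for $\eunique$ and $\cc$ are projections. The one routing difference worth noting is that the paper reaches $\eunique$ via $\eunsat$ rather than directly from $\efsat$: the uniqueness gadget it uses is ``$\phi$ in CNF is unsatisfiable iff $(z\vee\phi)\wedge\bigwedge_j(\neg z\vee y_j)$ has exactly one satisfying assignment,'' which is naturally projective on CNF input, so composing with the $\efsat\leq_{fop}\eunsat$ step is exactly the ``break into simple projective stages'' fallback you anticipated.
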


This four problems are known to be in $\Sp$. They are even known to be $\Sp$-complete for non-projective reductions. Thus it remains to show they are $\Sp$-hard via fops.

 Since $\leq_{fop}$ is a transitive relation we can prove that $B$ is hard in a complexity class {\bf C} taking a suitable {\bf C}-hard (or {\bf C}-complete) problem $A$ and reducing it to $B$. For this approach to work we need to prove a first problem $\Omega$ complete from scratch i.e. given a generic problem $\Pi$ in {\bf C} we have to prove that $\Pi\leq_{fop}\Omega$. In {\bf NP} this first complete problem for every reduction notion is usually $\sat$. In $\Sp$ it is $\efsat$. In \cite{stockmeyer} it is proved that $\efsat$ is $\Sp$-complete via log-space reductions. It can be shown that $\efsat$ is hard via fops with a proof similar to the one employed in \cite{immerman2} to show that $\sat$ is {\bf NP}-hard.

 \begin{proposition}\label{theorem-sat}
	$\efsat$ is $\Sp$-hard.
\end{proposition}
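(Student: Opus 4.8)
The plan is to mimic the classical Fagin/Cook-style argument that establishes SAT as $\np$-complete via first-order projections (as in \cite{immerman2}), lifting it one level of the quantifier hierarchy. Let $\Pi$ be an arbitrary problem in $\Sp$. By definition $\Pi = \md[\Psi]$ for some sentence $\Psi = \exists \vec R_1 \forall \vec R_2\, \theta$ in $\textrm{SO}_2[\tau]$, where $\theta$ is first order. I want to build a fop $I$ from $\struc[\tau]$ to instances of $\efsat$ (i.e. structures coding a quantified Boolean formula $\exists \vec p\, \forall \vec q\, \chi$ with $\chi$ in CNF, over the vocabulary $\sigma_{\text{cnf}}$ together with a partition of the variables into the $\exists$-block and the $\forall$-block) such that $\mathcal A \models \Psi$ iff $I(\mathcal A)$ is a positive instance of $\efsat$.

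The key steps, in order, are: (1) Put $\theta$ into a normal form — first Skolemize/prenex it so that $\theta = Q_1 x_1 \cdots Q_m x_m\, \eta$ with $\eta$ quantifier-free, then observe that over a fixed structure of size $n$ the first-order quantifiers range over the $n$ elements, so $\theta$ becomes a finite Boolean combination (a big conjunction/disjunction over tuples) of atomic facts about $\vec R_1, \vec R_2$ and the numeric relations. (2) Introduce propositional variables: for each relation variable $R$ in $\vec R_1$ of arity $a$ and each $a$-tuple $\vec b \in n^a$, a variable $p_{R,\vec b}$ encoding ``$\vec b \in R$''; likewise variables $q_{S,\vec c}$ for the relations $S$ in $\vec R_2$. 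The $\exists$-block of the output QBF is exactly the $p$'s and the $\forall$-block the $q$'s; the index tuples $(R,\vec b)$ are themselves coded as elements of a suitable power $n^k$ of the universe, which is what lets the construction be first-order over $\mathcal A$. (3) Produce, in CNF, a propositional formula $\chi$ over these variables that is satisfied by a truth assignment exactly when the corresponding interpretation of $\vec R_1, \vec R_2$ satisfies $\theta$ in $\mathcal A$; this is the standard ``Tseitin-style'' conversion of the quantifier-free matrix evaluated over all tuples, and the atoms involving the fixed numeric relations of $\mathcal A$ collapse to constants $\tr/\fa$ — crucially these are the only places the input structure $\mathcal A$ enters, and it enters only through numeric predicates and $\tau$-literals, which is precisely the shape allowed in a fop. (4) Verify that each defining formula $\varphi_0, \varphi_i, \psi_j$ of $I$ has the required projective form: $\varphi_0$ is numeric (it just cuts $n^k$ down to the set of clause-indices / variable-indices, a numeric condition), and each clause/membership relation of the target structure is a disjunction of numeric conditions conjoined with a single $\tau$-literal, because the only non-numeric content is ``which atoms of $\mathcal A$ are true,'' read off by one literal at a time.

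The main obstacle — and the place deserving the most care — is step (3) together with the projectivity check in step (4): making sure the CNF-encoding of ``$\mathcal A, \vec R_1, \vec R_2 \models \theta$'' can be laid out so that each output clause depends on $\mathcal A$ through at most one $\tau$-literal (the defining feature of a fop, as opposed to a mere first-order reduction). The standard trick is that the numeric relations $\leq, \bit, \suc, \mx$, etc. are fixed and so contribute only numeric subformulas, while the genuine input relations of $\tau$ appear in $\theta$ only in atomic subformulas, each of which gets its own Tseitin auxiliary variable and its own defining clauses; since an atom $R^{\mathcal A}(\vec t)$ is itself a $\tau$-literal, those clauses are projective. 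Once this layout is fixed, everything else — prenexing $\theta$, introducing the $n^a$-many propositional variables per second-order variable, the correctness biconditional $\mathcal A \models \exists \vec R_1 \forall \vec R_2\, \theta \iff \exists \vec p\, \forall \vec q\, \chi$, and confirming $I(\mathcal A) \in \struc[\sigma_{\text{cnf}}]$ with the correct $\exists/\forall$ partition — is routine and parallels the $\sat$ case in \cite{immerman2} verbatim, only with a second (universal) block of variables carried along. This completes the construction of the fop $\Pi \leq_{fop} \efsat$, and since $\Pi \in \Sp$ was arbitrary, $\efsat$ is $\Sp$-hard via fops.
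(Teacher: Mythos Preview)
Your high-level strategy---lift the Cook/Fagin argument by introducing Boolean variables for atoms of the second-order relations and then verify projectivity---is the right one, but there is a genuine gap in the execution, and it also diverges from the paper's setup in a way that matters.

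First, a mismatch: in this paper $\efsat$ is defined over \emph{DNF} formulas (see Appendix~\ref{app:problems} and the vocabulary $\sigfnd=\tup{E^1,Q^2,M^2}$), not CNF. Your target $\exists\vec p\,\forall\vec q\,\chi$ with $\chi$ in CNF is a different (though $\Sp$-equivalent) problem, so at minimum the reduction lands in the wrong structure class.

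Second, and more seriously: you explicitly invoke a ``Tseitin-style'' encoding to convert the expanded first-order matrix into polynomial-size CNF. Tseitin introduces auxiliary variables for intermediate subformulas, and those subformulas depend on the $q$-variables coming from the universally quantified block $\vec R_2$. Such auxiliaries must be quantified \emph{after} $\forall\vec q$, producing a prefix $\exists\vec p\,\forall\vec q\,\exists\vec{a}$ with three alternations---no longer an instance of $\efsat$. You never say where the auxiliaries go, and with a CNF target there is no straightforward way to fold them into the two-block prefix.

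The paper avoids both problems by choosing the normal form more carefully at the outset: every $\Sp$ problem is written as $\exists\vec S\,\forall\vec T\,\exists x_1\cdots\exists x_c\,\varphi$ with $\varphi$ quantifier-free in DNF and each implicant containing at most one $\sigma$-literal. Expanding the inner first-order $\exists\vec x$ as a disjunction over $c$-tuples simply enlarges the DNF; output implicants are indexed by pairs $(i,\vec x)$, and such an implicant is present iff its numeric part and its single $\sigma$-literal hold in $\mathcal A$. That is exactly one $\tau$-literal per case, so the defining formulas $\varphi_Q$, $\varphi_M$ are projective by construction, and no auxiliary variables are needed at all. Your step~(1) should be strengthened to this specific normal form, after which step~(3) becomes a direct relabeling rather than a Tseitin conversion.
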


\begin{proof}
	Consider a vocabulary $\sigma$ and a problem $A \subseteq \struc[\sigma]\in \Sp$, so there is a sentence $\Phi$ in $\textrm{SO}_2[\sigma]$ such that $A = \md[\Phi]$. % Since $\efsat$ is known to be in $\Sp$ we only need to see that $A \leq_{fop} \efsat$. 
    We proceed to prove that $A \leq_{fop} \efsat$.
    Since $A\in\Sp$ we can assume $\Phi$ has the form:
	\begin{equation}\label{eq:efsat}
	\Phi \equiv \exists S_1^{a_1} \cdots S_g^{a_g} \forall T_1^{b_1} \cdots T_h^{b_h} \exists x_1\cdots x_c \varphi,
	\end{equation}
	with
	$$\varphi(x_1,\ldots,x_c) \equiv \bigvee_{i=1}^r D_i(x_1,\ldots,x_c),$$
	where each $D_i$ is an implicant $(L_1 \wedge \ldots \wedge L_{\ell_i})$. Each $L$ is a literal. If $L$ is an existentially (universally) quantified relation $S_1,\ldots,S_g$ ($T_1,\ldots,T_h$) or its negation we say it is an \textit{existential} (\textit{universal}) literal. Notice we require $\varphi(x_1,\ldots,x_c)$ to be in $\fnd$. We can also assume that each $D_i$ has at most one literal from $\sigma$. In the following, notation $L(\vec{x})$ means that literal $L$ is evaluated in the $c$-tuple $\vec{x}$ as $\varphi$ indicates, not that $L$ is a $c$-ary relation. %\footnote{Because we can restrict our nondeterministic Turing Machines to have at most two choices on each step.}.
	
	Suppose $\mathcal{A}$ is an instance of $A$, we must map it to a boolean formula $\rho(\mathcal{A})$ satisfying
	\begin{equation}\label{eq:sii-efso}
	\mathcal{A} \in A \quad \iff \quad \rho(\mathcal{A}) \in \efsat.
	\end{equation}
	We will use the sentence $\exists x_1\cdots x_c \varphi$ to do that. We describe $\rho(\mathcal{A})$ with the vocabulary $\sigfnd=\tup{E^1,Q^2,M^2}$. The relation symbol $E$ is intended to identify existential variables, $Q(x,y)$ (resp. $M(x,y)$) means that variable $y$ occurs positively (negatively) in implicant $x$.
	
	We identify the universe of $\rho(\mathcal{A})$ with a subset of $|\mathcal{A}|^k$ where $k = \log(m) + c$ and $m = \max\{g+h,r\}$. Each element  $\vec{x} \in |\rho(\mathcal{A})|$ will be regarded as the concatenation of two tuples $\vec{x}_1$ and $\vec{x}_2$ of lengths $|\vec{x}_1|=\log(m)$ and $|\vec{x}_2|=c$. Some elements of $|\rho(\mathcal{A})|$ represent subformulas (implicants) and atomic formulas in $\Phi$ as follows: 
	
	\begin{itemize}
		\item implicant $D_i(\vec{x}_2)$ is interpreted as the tuple $\vec{x}_1 \vec{x}_2$ where $\vec{x}_1$ is the binary codification of index $i$;
        \item literal $S_i(\vec{x}_2)$ is interpreted as the Boolean variable $\vec{x}_1 \vec{x}_2$ where $\vec{x}_1$ is the binary codification of index $i$ (which is a number between 1 and $g$);
		\item literal $T_j(\vec{x}_2)$ is interpreted as the Boolean variable $\vec{x}_1 \vec{x}_2$ where $\vec{x}_1$ is the binary codification of the number $g+j$ to represent index $j$.
	\end{itemize}
    
    Notice that many elements in the universe of $\rho(\mathcal{A})$ might refer to an implicant and a Boolean variable simultaneously, but this is not a problem at all, because interpretations of symbols $E$, $Q$ and $M$ will be quite clear in context.
    
    A structure $\mathcal{A}$ is a positive instance of problem $A$ iff for some interpretation of the literals $S$ there is an index $i = 1,\ldots, r$ and a $c$-tuple $\vec{x}_2$ such that $\mathcal{A} \models D_i(\vec{x}_2)$, no matter how the universal literals $T$ are interpreted. The implicants of $\rho(\mathcal{A})$ are determined by those $D_i(\vec{x}_2)$ such that its satisfiability depends necessarily on the existential and universal literals. Based on these ideas the relations $E^{\rho(\mathcal{A})}$, $Q^{\rho(\mathcal{A})}$ and $M^{\rho(\mathcal{A})}$ will be constructed.
	
	Set $E^{\rho(\mathcal{A})}$ is easily described by a numeric first-order formula. The tuple $\vec{x}_1\vec{x}_2$ is in $E$ (i.e. it represents an existential Boolean variable) iff $\vec{x}_1$ is the binary codification of some number $i \leq g$. In short, $E$ is defined by the formula
    \begin{equation}
    \varphi_E(\vec{x}_1\vec{x}_2) \equiv (\exists i) (i \leq g \wedge \vec{x}_1 = \bin (i)),
    \end{equation}
	where $\bin(i)$ denotes the binary codification of $i$ in $\log(m)$-bits and $\vec{x}_1 = \bin (i)$ is the bit-equality. 
    
    The set $Q^{\rho(\mathcal{A})}$ requires some further analysis. Suppose the implicant $D_i(\vec{x}_2)$ contains the sub-formula
	$\big(\alpha \wedge R(\vec{x}_2) \wedge L(\vec{x}_2)\big)$, where $\alpha$ is the conjunction of every numeric literal appearing in $D_i$, $R(\vec{x}_2)$ is the only $\sigma$-literal in $D_i$ and $L$ is any positive existential or universal literal in $D_i$. If $\mathcal{A} \not\models \alpha \wedge R(\vec{x}_2)$, there is no need to refer to $L(\vec{x}_2)$ because even if $\mathcal{A}$ satisfies it,
	$\mathcal{A} \not\models D_i(\vec{x}_2)$, so the implicants where this happens will be discarded. Now, the pair $(\vec{x}_1\vec{x}_2,\vec{y}_1\vec{y}_2)$ is in $Q$ iff $\vec{x}_1\vec{x}_2$ codifies an implicant $D_i(\vec{x}_2)$ such that its positive literals $L(\vec{x}_2)$ might be relevant for its satisfiability. In short, some part of $Q$ is determined by the disjunction of all the formulas
    \begin{equation}\label{eq:sat-red}
    (\vec{x}_1 = \bin(i)) \wedge (\vec{y}_1 = \bin([\ell])) \wedge (\vec{x}_2 = \vec{y}_2) \wedge \alpha \wedge R(\vec{y}_2),
    \end{equation}
	where the value $\ell$ is the index corresponding to literal $L$ as a relational variable and $[\ell]$ is $\ell$ if $L$ is existential or it is $g+\ell$ if $L$ is universal. The other part of $Q$ is determined by all the implicants in $\Phi$ that don't contain $\sigma$ relations i.e. formulas quite similar to (\ref{eq:sat-red}) but without the atom $R(\vec{y}_2)$. Denote by $\varphi_Q$ the disjunction of every formula described for $Q$.
	
	Similarly, we can construct a formula $\varphi_M$ to describe relation $M^{\rho(\mathcal{A})}$, except that in this case $L$ represents a negative existential or universal literal in a certain implicant.
	
	Notice that every formula mentioned so far is numeric or projective. Furthermore, reduction $\rho = \lambda_{\vec{x},\vec{y}}\langle \tr , \varphi_E,\varphi_Q,\varphi_M \rangle$ was constructed to satisfy condition (\ref{eq:sii-efso}).    \hfill$\square$
\end{proof}

As relation $\leq_{fop}$ is transitive, to evaluate the $\Sp$-completeness of a problem $B$ it is enough to prove that $\efsat \leq_{fop} B$.

\begin{example}\label{example-eunsat-complete}
	Known properties of Boolean formulas allow us to construct a natural reduction from $\efsat$ to $\eunsat$. Let $\sigst$ be a $\sigfnd$-structure and let $\rho(\sigst)$ be a $\sigfnc$-structure defined as follows:
	\begin{itemize}
		\item $|\rho(\mathcal{A})|=|\mathcal{A}|$;
		\item $E^{\rho(\mathcal{A})}=E^{\mathcal{A}}$, described by projective formula $\varphi_E(x) \equiv E(x)$;
		\item $P^{\rho\mathcal{(A)}}=M^{\mathcal{A}}$, described by projective formula $\varphi_P(x,y) \equiv M(x,y)$;
		\item $N^{\rho\mathcal{(A)}}=Q^{\mathcal{A}}$, described by projective formula $\varphi_N(x,y) \equiv Q(x,y)$.
	\end{itemize}
	
    Notice that the last two items means that $\rho(\sigst)$ is the negation of $\sigst$, written by De Morgan' law as a CNF Boolean formula, which is a new structure obtained from $\sigst$ through projective formulas. Now, $\rho=\lambda_{xy}\langle \tr,\varphi_E,\varphi_P,\varphi_N \rangle$ is a projection from $\efsat$ to $\eunsat$. \hfill$\square$
\end{example}

In the following examples we show that reductions propose by Daniel Marx in \cite{marx,marx2} are in fact projections. 

\begin{example}\label{example-unique-complete}
	A reduction from $\eunsat$ to $\eunique$. The property that allow us to prove the latter problem is $\Sp$-hard is the following: the Boolean formula  $\phi(y_1,\ldots,y_m)$ is unsatisfiable iff
	\begin{equation}\label{eq:eunique2}
	(z \vee \phi(y_1,\ldots,y_m)) \wedge (\neg z \vee y_1) \wedge \cdots \wedge (\neg z \vee y_m)
	\end{equation}
	has only one truth valid assignment (specifically, the one that assigns the value $\tr$ to every variable $y_i$ and to the new variable $z$). Notice that if $\phi(y_1,\ldots,y_m)$ is a $\fnc$ formula then we can assume (\ref{eq:eunique2}) is also a $\fnc$ formula, because the new variable $z$ can be distributed in every clause of $\phi$.
	
	Let $\mathcal{A}$ be a $\sigfnc$-structure as an instance of $\eunsat$. We need to construct another $\sigfnc$-estructura $\rho(\mathcal{A})$, such that $$\mathcal{A} \in \eunsat \iff \rho(\mathcal{A}) \in \eunique.$$ If $\mathcal{A}$ is the codification of a Boolean formula $\phi$, then $\rho(\mathcal{A})$ will be the codification of (\ref{eq:eunique2}). Suppose the universe of $\mathcal{A}$ is $n$ and define
	\begin{itemize}
		\item $|\rho(\mathcal{A})|=\{(i,j) \in n^2 : i=0 \vee i=1\}=2n$;
		\item the pairs $(0,y)$ are the interpretations of the variables $y$ of $\mathcal{A}$;
		\item the pair $(1,0)$ is the interpretation of the new variable $z$;
		\item the first clauses of $\rho(\mathcal{A})$ are the same of $\mathcal{A}$ except that in each one of them the variable $z$ is included;
        \item every variable $y$ defines a new clause on $\rho(\mathcal{A})$: $(\neg z \vee y)$ is a clause of $\rho(\mathcal{A})$ iff $y$ is an universal variable of $\mathcal{A}$, otherwise, the tautology $(\neg y \vee y)$ is the corresponding clause.		
	\end{itemize}
	
	Considering all these conditions we described explicitly the structure $\rho(\mathcal{A})$. The numeric formula $\psi_0(x,y) \equiv (x=0 \vee x=1)$ described the universe.
	
	The set $E^{\rho(\mathcal{A})}$ is described by the formula
	\begin{equation*}
	\psi_E(x,y) \equiv \big[ x=1 \wedge y \neq 0 \big] \vee \big[ x=0 \wedge E(y) \big].
	\end{equation*}
	
	The set $P^{\rho(\mathcal{A})}$ is described by the formula
	\begin{align*}
	\psi_P(x,y,z,w) \equiv &\big[ x=0 \wedge z=1 \wedge w=0 \big] \vee
	\big[ x=0 \wedge z=0 \wedge  P(y,w)  \big] \vee \\
	&\big[ x=1 \wedge z=0 \wedge y=w \wedge \neg E(y) \big] \vee
	\big[ x=1 \wedge z=1 \wedge y=w \wedge E(y) \big].
	\end{align*}
	
	The set $N^{\rho(\mathcal{A})}$ is described by the formula
	\begin{align*}
	\psi_N(x,y,z,w) \equiv &\big[ x=0 \wedge z=0 \wedge  N(y,w)  \big] \vee
	\big[ x=1 \wedge z=1 \wedge w=0 \wedge \neg E(y) \big] \vee \\
	&\big[ x=1 \wedge z=1 \wedge y=w \wedge E(y) \big].
	\end{align*}
	
	The last implicant in both $\psi_P$ and $\psi_N$ is an auxiliary condition, that allows to see every element of $|\rho(\sigst)|$ as the index of some clause. By construction, $\rho=\lambda_{xyzw}\langle \psi_0,\psi_E,\psi_P,\psi_N \rangle$ is a projection from $\eunsat$ to $\eunique$. \hfill$\square$
\end{example}

%\begin{example}\label{example-victory-complete}
%	En \cite{immerman2}, Immerman propone una proyeccion de $\sat$ a $\tcolor$ (problema  de colorabilidad, \textsc{Ejemplo \ref{example-victory}}) con una propiedad muy favorable. 
	
%	Sean $\rho:\struc[\sigfnc] \rightarrow \struc[\sigg]$ tal reduccion, $\sigst$ una $\sigfnc$-estructura y el grafo $\rho(\sigst)$ su respectiva interpretacion. $\rho(\sigst)$ es $3$-colorable si y solo si $\sigst$ es satisfactible y ademas, toda coloracion de $\rho(\sigst)$ solo puede ser inducido por un certificado de $\sigst$. De esta forma, podemos plantear una reduccion de $\eunsat$ a $\victory$, donde los vertices que pertenecen al Jugador 1 son aquellos que dependen de las variables existenciales de $\sigst$. Este reconocimiento se puede hacer mediante formulas proyectivas.  \hfill$\square$
%\end{example}

\begin{example}\label{example-2cc-complete}
	A reduction from $\efsat$ to $\cc$.\footnote{Actually, Marx reductions are defined from $\eftsat = \efsat \cap \tfnd$, where $\tfnd$ is the set of Boolean formulas in $\fnd$ with no more than three literals per implicant. This problem is also $\Sp$-complete \cite{stockmeyer}. The proves of $\cc$ and $\eunique$ completeness don't depend on the number of literals in every implicant, that's why we decided to work with $\efsat$.} Let $\sigst=\langle n,E,Q,M \rangle$ be an instance of $\efsat$ that codifies a Boolean formula $\phi$. Consider the graph $\mathcal{G}_{\phi}$ with the following characteristics:
	\begin{itemize}
		\item $\mathcal{G}_{\phi}$ has $6n$ nodes;
		\item For each Boolean variable $x_i$ of $\phi$ there are four kinds of labels in $\mathcal{G}_{\phi}$:\\ $x_i$, $\bar{x_i}$, $x_i'$ y $\bar{x_i}'$;
		\item For each implicant $p_i$ of $\phi$ there are two kinds of labels in $\mathcal{G}_{\phi}$: $p_i$ y $p_i'$.
	\end{itemize}
	This is all concerning the universe of $\mathcal{G}_{\phi}$. Regarding the edges we established the following conditions:
	\begin{itemize}
		\item Nodes $x$ and $\bar{x}$ are adjacent to $x'$ y $\bar{x}'$ respectively;
		\item If $x$ is an existential variable of $\phi$, then $x'$ is adjacent to $\bar{x}'$;
		\item Nodes $p_i$ y $p_i'$ are adjacent, just like $p_i'$ and $p_{i+1}$. In other words, the sequence $p_1-p_1'-p_2-p_2'-\cdots-p_n-p_n'$ is a path in $\mathcal{G}_{\phi}$;
		\item If $x$ isn't an existential variable of $\phi$, then the nodes $x'$ and $\bar{x}'$ are adjacent to $p_n'$;
		\item The set of nodes $x$ and $\bar{x}$ induces the \textit{greatest} graph not containing the edges $\{ x,\bar{x} \}$ for every node $x$;
		\item Node $x$ is adjacent to $p$, if $x$ appears in implicant $p$ of $\phi$;
		\item Node $\bar{x}$ is adjacent to $p$, if $\neg x$ appears in implicant $p$ of $\phi$;
		\item If none of the literals $x$ or $\neg x$ appears in implicant $p$, then nodes $x$ and $\vec{x}$ are adjacent to $p$.
	\end{itemize}
	
	In the image it is shown in detail the graph $\mathcal{G}_{\phi}$ for a particular formula $\phi$. The dashed line surrounding variables $x$ and $\bar{x}$ represents the graph induced by these variables, as established in the fifth item from the last list. 
	
	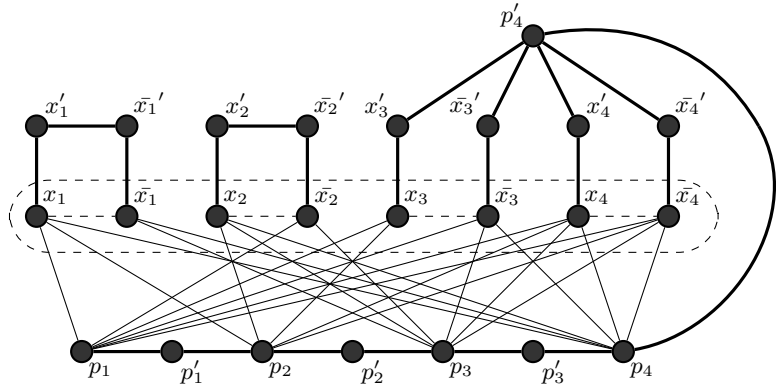
\begin{figure}[h]\label{fig:marx-graph}
    \begin{center}
		\tikzstyle{every node}=[circle, draw, fill=black!80,inner sep=0pt, minimum width=8pt]
		\begin{tikzpicture}[thick,scale=0.6]
		\node (x1) at (0,0) [label = above right : \small$x_1$] {};
		\node (nx1) at (2,0) [label = above right : \small$\bar{x_1}$] {};
		\node (x12) at (0,2) [label = above right : \small$x_1'$] {};
		\node (nx12) at (2,2) [label = above right : \small$\bar{x_1}'$] {};
		\node (x2) at (4,0) [label = above right : \small$x_2$] {};
		\node (nx2) at (6,0) [label = above right : \small$\bar{x_2}$] {};
		\node (x22) at (4,2) [label = above right : \small$x_2'$] {};
		\node (nx22) at (6,2) [label = above right : \small$\bar{x_2}'$] {};
		\node (x3) at (8,0) [label = above right : \small$x_3$] {};
		\node (nx3) at (10,0) [label = above right : \small$\bar{x_3}$] {};
		\node (x32) at (8,2) [label = above left : \small$x_3'$] {};
		\node (nx32) at (10,2) [label = above left : \small$\bar{x_3}'$] {};
		\node (x4) at (12,0) [label = above right : \small$x_4$] {};
		\node (nx4) at (14,0) [label = above right : \small$\bar{x_4}$] {};
		\node (x42) at (12,2) [label = above right : \small$x_4'$] {};
		\node (nx42) at (14,2) [label = above right : \small$\bar{x_4}'$] {};
		\node (p1) at (1,-3) [label = below right : \small$p_1$] {};
		\node (p12) at (3,-3) [label = below right : \small$p_1'$] {};
		\node (p2) at (5,-3) [label = below right : \small$p_2$] {};
		\node (p22) at (7,-3) [label = below right : \small$p_2'$] {};
		\node (p3) at (9,-3) [label = below right : \small$p_3$] {};
		\node (p32) at (11,-3) [label = below right : \small$p_3'$] {};
		\node (p4) at (13,-3) [label = below right : \small$p_4$] {};
		\node (p42) at (11,4) [label = above left : \small$p_4'$] {};
		\draw[very thick] (x1) -- (x12) -- (nx12) -- (nx1);
		\draw[very thick] (x2) -- (x22) -- (nx22) -- (nx2);
		\draw[very thick] (x3) -- (x32);
		\draw[very thick] (nx32) -- (nx3);
		\draw[very thick] (x4) -- (x42);
		\draw[very thick] (nx42) -- (nx4);
		\draw[very thick] (p42) -- (x32);
		\draw[very thick] (nx32) -- (p42);
		\draw[very thick] (p42) -- (x42);
		\draw[very thick] (nx42) -- (p42);
		\draw[very thick] (p1) -- (p12) -- (p2) -- (p22) -- (p3) -- (p32) -- (p4);
		\draw[dashed,rounded corners=15pt,thin] (-0.6,-0.8) rectangle (15.1,0.8);
		\draw[dashed,thin] (x1) -- (nx1);
		\draw[dashed,thin] (x2) -- (nx2);
		\draw[dashed,thin] (x3) -- (nx3);
		\draw[dashed,thin] (x4) -- (nx4);
		\draw[very thick] (p4) [out=10, in=-60] to (16,2);
		\draw[very thick] (p42) [out=10, in=120] to (16,2);
		\draw[thin] (x1) -- (p1);
		\draw[thin] (nx2) -- (p1);
		\draw[thin] (x3) -- (p1);
		\draw[thin] (nx3) -- (p1);
		\draw[thin] (x4) -- (p1);
		\draw[thin] (nx4) -- (p1);
		\draw[thin] (x1) -- (p2);
		\draw[thin] (x2) -- (p2);
		\draw[thin] (x3) -- (p2);
		\draw[thin] (x4) -- (p2);
		\draw[thin] (nx4) -- (p2);
		\draw[thin] (nx1) -- (p3);
		\draw[thin] (x2) -- (p3);
		\draw[thin] (nx2) -- (p3);
		\draw[thin] (nx3) -- (p3);
		\draw[thin] (x4) -- (p3);
		\draw[thin] (nx4) -- (p3);
		\draw[thin] (x1) -- (p4); 
		\draw[thin] (nx1) -- (p4);
		\draw[thin] (x2) -- (p4);
		\draw[thin] (nx3) -- (p4);
		\draw[thin] (x4) -- (p4);  
		\draw[thin] (nx4) -- (p4); 
		%\draw \foreach \x in {18,90,...,306} {
		%	(\x:4) node{} -- (\x+72:4)
		%	(\x:4) -- (0:0) node{}
		%};
		\end{tikzpicture}
		\caption{Graph for the Boolean formula $\phi_1 \equiv (x_1 \wedge \neg x_2) \vee (x_1 \wedge x_2 \wedge x_3 ) \vee (\neg x_1 \wedge \neg x_3) \vee (x_2 \wedge \neg x_3)$, with $x_1$ y $x_2$ as the existential variables.}
    \end{center}
	\end{figure}
	
	The function $\rho:\struc[\sigfnd'] \rightarrow \struc[G^2]$  given by $\rho(\sigst)=\mathcal{G}_{\phi}$ satisfies
	$$\sigst \in \efsat \quad \iff \quad \mathcal{G}_{\phi} \in \cc$$
    where $\phi$ is the Boolean formula codified by $\sigst$.
    The details of these fact can be seen in \cite{marx2}, we will only show why $\rho$ is a projection. The arity of $\rho$ is fixed as $k=4$. The universe of $\mathcal{G}_{\phi}$ is
	$$|\mathcal{G}_{\phi}|=\big\{ (i,j,k,x)\in n^{4} : ijk=\bin(m) \text{ para algun } m=1,\ldots,6 \big\}.$$
	
	In this example $\bin(m)$ is the three-bit-binary representation of $m$ and the expression $ijk=\bin(m)$ is the bit equality. The numeric first order formula describing the universe is
	$$\varphi_0(i,j,k,x) \equiv  (ijk=\bin(1))  \vee \cdots \vee (ijk=\bin(6)) .$$
	
	With this formula we are stating that $|\mathcal{G}_{\phi}|$ has exactly $6n$ elements. For each $x\in n$, the tuples $(0,0,1,x)$, $(0,1,0,x)$, $(0,1,1,x)$ and $(1,0,0,x)$ represent the nodes $x$, $x'$, $\bar{x}'$ y $\bar{x}$ respectively, while for each variable $p\in n$ (now as an index for implicants), $(1,0,1,p)$ and $(1,1,0,p)$ represent the nodes $p$ and $p'$ respectively. The formula $\varphi_G$ describing the set of edges will be the disjunction of all the following first-order formulas. 
	
	Nodes $x$ and $\bar{x}$ are adjacent to $x'$ and $\bar{x}'$ respectively:
	$$\big(x=p \wedge i_1j_1k_1=001 \wedge i_2j_2k_2=010\big) \vee \big(x=p \wedge i_1j_1k_1=011 \wedge i_2j_2k_2=100\big).$$
	
	If $x$ is an existential variable, then the node $x'$ is adjacent to $\bar{x}'$:
	$$\big(E(x) \wedge x=p \wedge i_1j_1k_1=010 \wedge i_2j_2k_2=011\big).$$
	
	The sequence $p_1-p_1'-p_2-p_2'-\cdots-p_n-p_n'$ is a path in $\mathcal{G}_{\phi}$:
	$$\big(x=p \wedge i_1j_1k_1=101 \wedge i_2j_2k_2=110\big) \vee \big(\suc(x,p) \wedge i_1j_1k_1=110 \wedge i_2j_2k_2=101\big).$$
	
	If $x$ isn't an existential variable, then nodes $x'$ y $\bar{x}'$ are adjacent to $p_n'$:
	\begin{align*}
	\big(\neg E(x) \wedge p=\max &\wedge i_1j_1k_1=010 \wedge i_2j_2k_2=110\big) \vee \\ &\big(\neg E(x) \wedge p=\max \wedge i_1j_1k_1=011 \wedge i_2j_2k_2=110\big).
	\end{align*}
	
	The set of nodes $x$ and $\bar{x}$ induces the greatest graph not containing the edges $\{ x,\bar{x} \}$:
	\begin{align*}
	\big( x\neq p \wedge& i_1j_1k_1=i_2j_2k_2=001 \big) \vee \big( x\neq p \wedge i_1j_1k_1=i_2j_2k_2=100 \big) \vee \\
	&\qquad\qquad \big( x\neq p \wedge i_1j_1k_1=001 \wedge i_2j_2k_2=100 \big).
	\end{align*}
	
	The last three edge conditions get compiled by the following formula:
	$$\big(i_1j_1k_1=001 \wedge i_2j_2k_2=101 \wedge \neg M(x,p)\big) \vee 
	\big(i_1j_1k_1=100 \wedge i_2j_2k_2=101 \wedge \neg Q(x,p)\big).$$
	
	Notice that all these formulas are projective and the numeric parts are mutually exclusive. Finally, the interpretation $\rho=\lambda_{ijkx}\langle \varphi_0,\varphi_G \rangle$ is a projection from $\efsat$ to $\cc$.   \hfill$\square$
\end{example}

\section{Further Concepts}\label{sec:FurtherConcepts}
Suppose $\Psi$ is a sentence in $\exists\textrm{SO}$.
According to \cite{medina} a first-order sentence $\varphi$ is \textit{superfluous} 
if there exists a fop $\rho$ from SAT to $\textrm{MOD}[\Psi \wedge \varphi]$ 
such that $\rho(\mathcal{A}) \models \varphi$
for every structure $\mathcal{A}$ representing a CNF Boolean formula.
An immediate consequence of the superfluity of $\varphi$ is
that the \textbf{NP}-completeness of $\textrm{MOD}[\Psi\land\varphi]$ 
implies that $\textrm{MOD}[\Psi]$ is \textbf{NP}-complete as well.

\begin{proposition}\cite{medina}\label{prop:medina}
If the conjunction $\Psi \land \varphi$ defines an \textbf{NP}-complete
with $\Psi$ a sentence in $\exists\textrm{SO}$
and $\varphi$ a superfluous sentence in $\textrm{FO}$
then $\textrm{MOD}[\Psi]$ is \textbf{NP}-complete
\end{proposition}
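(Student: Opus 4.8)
The plan is to show that $\textrm{MOD}[\Psi]$ is $\mathbf{NP}$-hard by exhibiting a fop from $\mathbf{SAT}$ to it, exploiting that $\mathbf{SAT}$ is $\mathbf{NP}$-complete via fops and that $\leq_{fop}$ is transitive; membership $\textrm{MOD}[\Psi]\in\mathbf{NP}$ is immediate since $\Psi\in\exists\textrm{SO}$ captures $\mathbf{NP}$. So the whole content is the hardness direction.

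First I would invoke the hypothesis that $\Psi\land\varphi$ defines an $\mathbf{NP}$-complete problem: in particular $\mathbf{SAT}\leq_{fop}\textrm{MOD}[\Psi\land\varphi]$. Actually it is cleaner to use the superfluity of $\varphi$ directly, which by definition hands us a fop $\rho$ from $\mathbf{SAT}$ to $\textrm{MOD}[\Psi\land\varphi]$ with the extra property that $\rho(\mathcal{A})\models\varphi$ for every structure $\mathcal{A}$ coding a CNF formula. The claim is that this same $\rho$, viewed now as a map into $\textrm{STRUC}[\sigma_\Psi]$ (where $\sigma_\Psi$ is the vocabulary of $\Psi$), is already a fop from $\mathbf{SAT}$ to $\textrm{MOD}[\Psi]$. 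It is syntactically a fop because it is literally the same tuple of first-order formulas. So it remains to check correctness, i.e. that for every $\sigma_{\textrm{cnf}}$-structure $\mathcal{A}$ coding a CNF formula, $\mathcal{A}\in\mathbf{SAT}\iff\rho(\mathcal{A})\models\Psi$.

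For the forward direction, if $\mathcal{A}\in\mathbf{SAT}$ then $\rho(\mathcal{A})\models\Psi\land\varphi$, hence in particular $\rho(\mathcal{A})\models\Psi$. For the backward direction, suppose $\rho(\mathcal{A})\models\Psi$. Since $\mathcal{A}$ codes a CNF formula, superfluity gives $\rho(\mathcal{A})\models\varphi$; combining the two, $\rho(\mathcal{A})\models\Psi\land\varphi$, so $\rho(\mathcal{A})\in\textrm{MOD}[\Psi\land\varphi]$, whence $\mathcal{A}\in\mathbf{SAT}$ because $\rho$ reduces $\mathbf{SAT}$ to $\textrm{MOD}[\Psi\land\varphi]$. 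This establishes $\mathbf{SAT}\leq_{fop}\textrm{MOD}[\Psi]$, and transitivity of $\leq_{fop}$ together with $\mathbf{NP}$-completeness of $\mathbf{SAT}$ via fops yields $\mathbf{NP}$-hardness of $\textrm{MOD}[\Psi]$.

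The only genuine subtlety — and the step I would be most careful about — is the interplay of vocabularies: $\rho$ as produced by the definition of superfluity targets $\textrm{STRUC}[\sigma_\Psi\cup\sigma_\varphi]$ or, if $\Psi$ and $\varphi$ share their vocabulary, simply $\textrm{STRUC}[\sigma_\Psi]$; one must make sure that restricting/reinterpreting $\rho$ to the vocabulary of $\Psi$ alone is still a well-defined fop and that $\rho(\mathcal{A})\models\Psi$ is evaluated over the same universe and numeric relations as $\rho(\mathcal{A})\models\varphi$. Since $\varphi\in\textrm{FO}$ adds no new second-order content and the universe and the interpretations of all symbols of $\sigma_\Psi$ in $\rho(\mathcal{A})$ are untouched by dropping $\varphi$ from the conjunction, this is routine; but it is exactly the place where the argument would break if, say, $\varphi$ were allowed to mention symbols outside $\sigma_\Psi$ in an essential way. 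I would state this compatibility explicitly as the one nontrivial observation and then the proof is a two-line chase, which is why this proposition is really just a restatement of the definition of superfluity.
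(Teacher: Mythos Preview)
Your argument is correct and is exactly the ``immediate consequence of the superfluity of $\varphi$'' that the paper alludes to before stating the proposition (the paper itself gives no further proof, deferring to \cite{medina}). The vocabulary concern you raise is harmless here since $\Psi$ and $\varphi$ live over the same vocabulary in this setup, so the paragraph on $\sigma_\Psi$ versus $\sigma_\varphi$ could be dropped; also note that with the paper's definition of superfluous the NP-completeness hypothesis on $\textrm{MOD}[\Psi\land\varphi]$ is in fact not used --- the fop $\rho$ is handed to you directly by superfluity --- so your initial impulse to invoke that hypothesis for the existence of a reduction is unnecessary, as you yourself observe in the next sentence.
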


Classes of structures definable in first-order logic 
are strictly contained in {\bf L}, thus the expressive power
of first-order logic is strictly less than the expressive
power of existential second-order.
It is reasonable then to conjecture 
that the hypothesis of $\varphi$ being superfluous in
Proposition \ref{prop:medina} is not necessary:
\begin{conjecture}\cite{medina}\label{con:medina}
If the conjunction $\Psi \land \varphi$ defines an \textbf{NP}-complete
with $\Psi$ a sentence in $\exists\textrm{SO}$
and $\varphi$ a sentence in $\textrm{FO}$
then $\textrm{MOD}[\Psi]$ is \textbf{NP}-complete. 
\end{conjecture}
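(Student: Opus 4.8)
The plan is to reduce Conjecture~\ref{con:medina} to the universal case already settled by Borges and Bonet in \cite{borges3} (and extended to the second level in Section~\ref{sec:main}). Since $\Psi\in\exists\textrm{SO}$, Fagin's theorem gives $\textrm{MOD}[\Psi]\in\textbf{NP}$ for free, so only hardness is at stake. We are handed a fop $f$ from SAT to $\textrm{MOD}[\Psi\land\varphi]$, and, exactly as in the argument behind Proposition~\ref{prop:medina}, the whole game is to massage $f$ into a fop from SAT to $\textrm{MOD}[\Psi]$. Observe that $\textrm{MOD}[\Psi\land\varphi]=\textrm{MOD}[\Psi]\cap\textrm{MOD}[\varphi]$ and that $\textrm{MOD}[\varphi]\in\textbf{L}$; so, intuitively, the $\textbf{NP}$-hardness of the intersection cannot come from the ``easy'' factor $\textrm{MOD}[\varphi]$ and must already live in $\textrm{MOD}[\Psi]$. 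Concretely, it suffices to produce a fop $g$ with $g(\mathcal{A})\models\varphi$ for every CNF-structure $\mathcal{A}$ and $\mathcal{A}\in\text{SAT}\iff g(\mathcal{A})\models\Psi\land\varphi$; then $\mathcal{A}\in\text{SAT}\iff g(\mathcal{A})\models\Psi$, i.e.\ it is enough to show that every $\varphi\in\textrm{FO}$ is superfluous relative to any $\Psi$ for which $\textrm{MOD}[\Psi\land\varphi]$ is $\textbf{NP}$-complete.

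First I would put $\varphi$ in prenex normal form and induct on its number of quantifier blocks. The base case is $\varphi$ universal, which is precisely \cite{borges3}. For the inductive step, write $\varphi\equiv\exists\vec{u}\,\psi$ with $\psi$ one block shorter; a leading $\forall$ block is first traded for a leading existential \emph{second}-order one by Skolemization, $\forall\vec{x}\exists\vec{y}\,\theta\equiv\exists\vec{F}\,\forall\vec{x}\,\theta'$, after which the outermost remaining block is existential. Because the new variables do not occur in $\Psi=\exists\vec{R}\,\chi$, we get $\Psi\land\varphi\equiv\exists\vec{R}\,\exists\vec{u}\,(\chi\land\psi)$, and the leading first-order existential $\exists\vec{u}$ can be replaced by fresh singleton relations, absorbing it into the second-order prefix. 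The hope is that the resulting sentence is again of the form $\Psi'\land\varphi'$ with $\Psi'\in\exists\textrm{SO}$ and $\varphi'$ strictly simpler, so that the inductive hypothesis — together with the fact that $\textrm{MOD}[\Psi'\land\varphi']$ stays $\textbf{NP}$-complete, being the same problem — closes the loop.

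The hard part is exactly this last step, and it is where Conjecture~\ref{con:medina} becomes genuinely open. Pulling a quantifier of $\varphi$ into the second-order prefix creates a relation symbol — the Skolem function, or the singleton witnessing $\exists\vec{u}$ — that is \emph{shared} between the former $\chi$-part and the leftover of $\varphi$, so the rewritten sentence does not split into the clean shape $\Psi'\land\varphi'$ with disjoint second-order scopes that both the inductive hypothesis and Proposition~\ref{prop:medina} require: cutting the sentence anywhere severs the scope of that symbol. Making the step go through seems to demand either a product/duplication construction on structures that confines all the action of $\Psi$ to one ``copy'' of the universe, leaving a disjoint ``copy'' in which the Skolem witnesses for $\varphi$ can be installed without disturbing $\Psi$-membership, or a direct forcing transformation $T$ that is itself a fop, satisfies $T(\mathcal{B})\models\varphi$ for every structure $\mathcal{B}$, and preserves membership in $\textrm{MOD}[\Psi\land\varphi]$. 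Even the universal base case needed a delicate forcing gadget of the latter sort, and I do not see how to drive such a gadget past a single quantifier alternation of $\varphi$; this is why we leave the statement as a conjecture and content ourselves in Section~\ref{sec:main} with proving its universal-$\varphi$ instance, now lifted to the second level of the polynomial-time hierarchy.
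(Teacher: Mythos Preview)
The statement is a \emph{conjecture} in the paper (attributed to \cite{medina}), not a theorem; the paper offers no proof and does not claim one. It simply records the conjecture and notes that Proposition~\ref{prop:borges} is a partial answer for universal $\varphi$. So there is no proof on the paper's side to compare your attempt against.

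Your write-up is consistent with this: you sketch a plausible induction on the quantifier prefix of $\varphi$, correctly identify the obstruction---after Skolemizing or absorbing a leading existential block, the new relation symbols entangle the second-order part with whatever is left of $\varphi$, so the sentence no longer factors as $\Psi'\land\varphi'$ with $\varphi'\in\mathrm{FO}$ and the inductive hypothesis cannot be invoked---and then concede that the statement remains open. That concession matches the paper exactly. The heuristic discussion you add goes beyond what the paper says (the paper contains no such analysis of a failed approach), but it is reasonable and does not pretend to be a proof.
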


There is a partial answer to Conjecture \ref{con:medina} in \cite{borges3},
as a consequence of a stronger result (see Theorem \ref{theorem:borges}).
\begin{proposition}\cite{borges3}\label{prop:borges}
If the conjunction $\Psi \land \varphi$ defines an \textbf{NP}-complete
with $\Psi$ a sentence in $\exists\textrm{SO}$
and $\varphi$ a sentence in $\forall\textrm{FO}$
then $\textrm{MOD}[\Psi]$ is \textbf{NP}-complete
\end{proposition}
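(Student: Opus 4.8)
The plan is to reduce the statement to the \emph{superfluity} of $\varphi$ and then appeal to Proposition \ref{prop:medina}. Specifically, I would show that $\varphi$ is superfluous: that there exists a fop $\rho$ from $\sat$ to $\textrm{MOD}[\Psi\wedge\varphi]$ with $\rho(\mathcal{A})\models\varphi$ for every structure $\mathcal{A}$ coding a CNF formula. Once this is in hand, Proposition \ref{prop:medina} applied to the hypothesis that $\Psi\wedge\varphi$ defines an \textbf{NP}-complete problem yields at once that $\textrm{MOD}[\Psi]$ is \textbf{NP}-complete. It is worth recalling why superfluity is enough: such a $\rho$ is automatically a fop from $\sat$ to $\textrm{MOD}[\Psi]$, since on a satisfiable input $\rho(\mathcal{A})\models\Psi\wedge\varphi$ and hence $\rho(\mathcal{A})\models\Psi$, while on an unsatisfiable input $\rho(\mathcal{A})\not\models\Psi\wedge\varphi$ together with $\rho(\mathcal{A})\models\varphi$ forces $\rho(\mathcal{A})\not\models\Psi$; and since $\sat$ is \textbf{NP}-complete via fops this makes $\textrm{MOD}[\Psi]$ \textbf{NP}-hard, while $\textrm{MOD}[\Psi]$ belongs to \textbf{NP} by Fagin's theorem because $\Psi\in\exists\textrm{SO}$.

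The substance of the proof is therefore the construction of the superfluous reduction $\rho$. Because $\textrm{MOD}[\Psi\wedge\varphi]$ is \textbf{NP}-complete it is in particular \textbf{NP}-hard via fops, so there is \emph{some} fop $\rho_{0}\colon\sat\to\textrm{MOD}[\Psi\wedge\varphi]$. On satisfiable inputs $\rho_{0}$ already has the desired property, since then $\rho_{0}(\mathcal{A})\models\Psi\wedge\varphi$ and in particular $\rho_{0}(\mathcal{A})\models\varphi$; the only problematic inputs are unsatisfiable $\mathcal{A}$ for which $\rho_{0}(\mathcal{A})$ satisfies $\Psi$ but not $\varphi$. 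Writing $\varphi=\forall\vec{y}\,\theta(\vec{y})$ with $\theta$ quantifier-free, the idea is to modify $\rho_{0}$ so that its image is restricted to a substructure from which every $\theta$-violating tuple has been removed: universal sentences are preserved under substructures, so the resulting substructure satisfies $\varphi$, and on a satisfiable input there are no violating tuples, so nothing is removed and $\Psi$ is untouched.

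The main obstacle is to carry this out, and it has two facets --- exactly the point at which the full strength of \cite{borges3} (Theorem \ref{theorem:borges}) is needed. First, the surgery must be realized by a \emph{projection}: the universe formula must be numeric, and each relation formula must be a disjunction of mutually exclusive numeric guards conjoined with $\tau$-literals. The naive recipe of deleting the elements that occur in some $\theta$-violating tuple is not of this shape, so one must instead first compose with a reduction to an \textbf{NP}-complete problem whose standard coding is already free of $\theta$-violations and perform the modification there, using that fops compose. Second, and this is the genuinely delicate point, one must guarantee that on an unsatisfiable input the modification drives the image \emph{out of} $\textrm{MOD}[\Psi]$, not merely out of $\textrm{MOD}[\Psi\wedge\varphi]$ --- which is all that is automatic --- even though $\exists\textrm{SO}$-sentences are not in general preserved under the substructure operation used to enforce $\varphi$. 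Reconciling the projective requirement with this last point is the heart of the argument, and it is precisely the universal form of $\varphi$ (equivalently, that $\textrm{MOD}[\varphi]$ is closed under taking substructures) that makes the balance possible.
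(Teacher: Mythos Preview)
Your reduction to superfluity and the appeal to Proposition~\ref{prop:medina} is a legitimate framing, and your closing remark that Theorem~\ref{theorem:borges} is what is ultimately needed agrees with the paper: Proposition~\ref{prop:borges} is stated there precisely as a corollary of Theorem~\ref{theorem:borges}, together with the existence (established in \cite{borges3}) of a complete and universal family for \textbf{NP}.

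The gap is that the substructure-removal construction you sketch is not the mechanism behind Theorem~\ref{theorem:borges}, and --- as you yourself diagnose --- it does not work. Your second obstacle is not merely ``delicate'' but fatal: passing to a substructure of $\rho_0(\mathcal{A})$ on an unsatisfiable input may well produce a model of $\Psi$ (for instance, deleting a vertex from a non-$3$-colorable graph can leave a $3$-colorable one), so the modified map need not be a reduction to $\textrm{MOD}[\Psi]$ at all. Nothing about $\varphi$ being universal prevents this. Your suggested fix for the first obstacle --- composing with a reduction to a problem ``whose standard coding is already free of $\theta$-violations'' --- is not well-posed either, since $\theta$ is a formula over the target vocabulary $\tau$, not over the vocabulary of whatever auxiliary problem you have in mind.

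The actual argument in \cite{borges3} takes a different route and does not attempt to repair a given $\rho_0$. One reduces instead from a member $S_{n_k}$ of a \emph{complete and universal} family (Definition~\ref{def:family}): the $(n_k,k)$-universality of $S_{n_k}$ guarantees that every $m$-consistent conjunction of up to $k$ literals is realized by some structure of size $m$ inside $S_{n_k}$, and it is this abundance of instances --- not any substructure surgery --- that lets one build a fop whose image always satisfies the given universal sentence $\varphi$. So while your endpoint (invoke Theorem~\ref{theorem:borges}) is correct, the path you describe toward it is a dead end; the missing ingredient is the universal-family machinery of Definitions~\ref{def:universal} and~\ref{def:family}, which your proposal never mentions.
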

 
The following definitions and results are necessary to prove 
Theorem \ref{theorem:borges} and its corollary
Proposition \ref{prop:borges}. 

\subsection{Superfluity, Consistency and Universality}

Let $\sigma$ and $\tau$ be two vocabularies, $\mathcal{L}$ a logic, $\mathbf{C}$ the complexity class captured by $\mathcal{L}$, 
$\mathcal{L}'$ a fragment of $\mathcal{L}$ and $\rho:\textrm{STRUC}[\sigma] \rightarrow \textrm{STRUC}[\tau]$ a fop.
\begin{enumerate}
\item A sentence $\varphi \in \mathcal{L}'[\tau]$ is \textit{superfluous with respect to} $\rho$ if 
$\rho(\mathcal{A}) \models \varphi$ for every $\mathcal{A}  \in \textrm{STRUC}[\sigma]$.
\item $\varphi \in \mathcal{L}'$ is \textit{superfluous with respect to} $\mathcal{L}$ 
if for every sentence $\Psi \in \mathcal{L}$, the $\mathbf{C}$-completeness of $\textrm{MOD}[\Psi \wedge \varphi]$ 
implies the $\mathbf{C}$-completeness of $\textrm{MOD}[\Psi]$.
\item $\mathcal{L}'$ is \textit{superfluous with respect to} $\mathcal{L}$ (or $\mathbf{C}$) 
if every sentence $\varphi \in \mathcal{L}'$ is superfluous with respect to $\mathcal{L}$.
\end{enumerate}

Medina's conjecture can be paraphrased as 
FO is superfluous with respect to \textbf{NP}. 
To established the results of $\forall\textrm{FO}$ as a superfluous logic we need to 
introduce the notion of consistency of formulas and universality of problems. 

\begin{definition}
Let $\varphi(\vec{x})$ be a formula in $\textrm{FO}[\sigma]$, 
$n$ be a natural number, and $\vec{u} \in n^k$, where $k$ is the length of the first-order-variable tuple $\vec{x}$. 
We say that $\langle \varphi(\vec{x}),\vec{u} \rangle$ is $n$-\textit{consistent} 
if there is a $\sigma$-structure $\mathcal{A}$ such that $||\mathcal{A}|| = n$ and 
$\mathcal{A} \models \varphi(\vec{u})$. If $S \subseteq \textrm{STRUC}[\sigma]$, 
we say that $\langle \varphi(\vec{x}),\vec{u} \rangle$ is $n$-\textit{consistent in} $S$
if there is a structure $\mathcal{A} \in S$ such that $||\mathcal{A}|| = n$ and $\mathcal{A} \models \varphi(\vec{u})$. 
If there is no risk of confusion, we abbreviate by just saying that $\varphi(\vec{u})$ is $n$-consistent (in $S$).
\end{definition}

\begin{definition}\label{def:universal}
Let's suppose now that $\sigma = \langle R_1^{a_1},\ldots,R_r^{a_r}, c_1,\ldots, c_s \rangle$ and $S$ the same as before. Let $n$ and $t$ be two natural numbers. 

\begin{enumerate}

\item $S$ is $(n,0)$-\textit{universal} 
  if for every $m \geq n$ and every sequence $b_1,\ldots,b_s \in m$ 
  there is a structure $\mathcal{A} \in S$ with $\|\mathcal{A}\|=m$ and such that 
  $\mathcal{A} \models (c_1=b_1) \wedge \cdots \wedge (c_s=b_s)$.
\item $S$ is $(n,t)$-\textit{universal} if for every $m \geq n$, every sequence of $\sigma$-literals $L_1,\ldots,L_t$ (that is, 
  $L_j(\vec{x})$ is equal to $R_{i_j}(\vec{x})$ or $\neg R_{i_j}(\vec{x})$), 
  every sequence of tuples $\vec{u}_1,\ldots,\vec{u}_t$ with $\vec{u}_j \in m^{a_{i_j}}$ 
  and every sequence $b_1,\ldots,b_s \in m$, the $m$-con\-sis\-tency of
\begin{equation}
\varphi(\vec{u}_1,\ldots,\vec{u}_t,b_1,\ldots,b_s) \equiv \bigwedge L_j(\vec{u}_j) \wedge \bigwedge (c_k=b_k) \label{eq-universal}
\end{equation}
		implies its $m$-consistency in $S$ (that is, if there are models of (\ref{eq-universal}) of cardinality $m$, at least one belongs to $S$).
\end{enumerate}
\end{definition}
Universal problems are originally introduced in \cite{borges3} where they were called \textit{uniform}. 

The following properties are direct consequences of the later definition. 

\begin{lemma} \label{lemma:universal}
If $S \subseteq \textrm{STRUC}[\sigma]$ is $(n,k)$-universal, then it is also $(n,k-1)$-universal and $(n+1,k)$-universal.
If $S$ is $(n,k)$-universal and $S \subseteq T$, then $T$ is $(n,k)$-universal. 
\end{lemma}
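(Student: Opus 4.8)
The plan is to verify each of the three implications directly from Definition \ref{def:universal}; no reduction or fop is involved, so everything reduces to manipulating the quantifier pattern in the definition of $(n,t)$-universality (for every cardinality $m\geq n$, for every tuple of $\sigma$-literals, for every tuple of elements, $m$-consistency implies $m$-consistency in $S$). Two of the three implications are immediate, and the only one needing an idea is the first, which calls for padding a witnessing structure with one extra literal.

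For $(n,k)$-universal $\Rightarrow$ $(n,k-1)$-universal (so $k\geq 1$), I would fix arbitrary data as allowed by the definition of $(n,k-1)$-universality: a cardinality $m\geq n$, literals $L_1,\dots,L_{k-1}$, tuples $\vec u_1,\dots,\vec u_{k-1}$ and elements $b_1,\dots,b_s\in m$, and assume the formula $\psi\equiv\bigwedge_{j<k}L_j(\vec u_j)\wedge\bigwedge(c_k=b_k)$ is $m$-consistent, say via a structure $\mathcal A$ with $\|\mathcal A\|=m$ and $\mathcal A\models\psi$. The trick is to manufacture a $k$-th literal that $\mathcal A$ already satisfies: let $\vec u_k$ be the all-zero tuple of $m^{a_1}$ and set $L_k:=R_1$ if $\mathcal A\models R_1(\vec u_k)$ and $L_k:=\neg R_1$ otherwise, so that $\mathcal A\models\psi\wedge L_k(\vec u_k)$. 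Then the corresponding $k$-literal formula is $m$-consistent, and $(n,k)$-universality of $S$ delivers a structure $\mathcal B\in S$ with $\|\mathcal B\|=m$ and $\mathcal B\models\psi\wedge L_k(\vec u_k)$, hence $\mathcal B\models\psi$; this is exactly the $m$-consistency of $\psi$ in $S$. When $k=1$ the target is $(n,0)$-universality, which is phrased as an unconditional existence statement, so I would argue slightly differently: given $m\geq n$ and $b_1,\dots,b_s\in m$, pick any $\sigma$-structure of cardinality $m$ interpreting the $c_k$ by the $b_k$ and having $R_1$ empty; it satisfies $\neg R_1(\vec u_1)\wedge\bigwedge(c_k=b_k)$ for $\vec u_1$ the all-zero tuple of $m^{a_1}$, so this formula is $m$-consistent, and $(n,1)$-universality of $S$ yields the required member of $S$.

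The implication $(n,k)$-universal $\Rightarrow$ $(n+1,k)$-universal holds because the universality condition quantifies over all $m\geq n$, which includes all $m\geq n+1$; shrinking the range of $m$ cannot cause a failure. Finally, if $S$ is $(n,k)$-universal and $S\subseteq T$, then whenever a relevant formula is $m$-consistent the hypothesis produces a witnessing $\mathcal A\in S\subseteq T$, so the formula is $m$-consistent in $T$ too (and the $t=0$ clause of the definition transfers the same way).

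The only place any care is needed is the choice of $L_k$ in the first implication: it must be selected so as to be satisfied by a pre-existing model of $\psi$, so that applying the $(n,k)$-universality hypothesis does not introduce a contradiction. Everything else is bookkeeping over the definition, so I expect no real obstacle.
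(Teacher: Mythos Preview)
Your proof is correct and is precisely the direct verification from Definition~\ref{def:universal} that the paper has in mind; the paper itself gives no proof, asserting only that the three properties are ``direct consequences of the latter definition.'' Your one genuine idea---padding the $(k-1)$-literal conjunction with an extra literal chosen to be satisfied by a pre-existing witness (or, in the $k=1$ case, manufacturing such a witness explicitly)---is exactly what is needed, and the other two implications are as immediate as you say.
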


In \cite{borges3} it is proved the universality of many well-known problems. 
In the next section we will prove that 2CC and its complement are also universal problems.

\begin{definition}\label{def:family}
A family $\mathcal{F}$ of problems over a vocabulary $\sigma$ is \textit{complete and universal} for a complexity class $\mathbf{C}$ if 
\begin{enumerate}
\item every problem in $\mathcal{F}$ is $\mathbf{C}$-complete;
\item There is a sequence $\{ n_k \}_{k\geq 0}$ and a natural number $m$ such that for every $k\geq m$ there is a $(n_k,k)$-universal 
  problem $S_{n_k}$ in $\mathcal{F}$ that contains all the $\sigma$-structures $\mathcal{A}$ with $||\mathcal{A}|| < n_k$.
\end{enumerate}
\end{definition}

\begin{theorem}\label{theorem:borges}\cite{borges3}
Let $\mathbf{C}$ be a complexity class captured by $\mathcal{L}$ with $\mathrm{FO} \subseteq \mathcal{L}$. 
If $\mathbf{C}$ contains a complete and universal family $\mathcal{F}$, 
then $\forall\mathrm{FO}$ is superfluous with respect to $\mathbf{C}$.
\end{theorem}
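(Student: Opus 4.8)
The plan is to separate membership from hardness. Since $\Psi\in\mathcal{L}$ and $\mathbf{C}$ is captured by $\mathcal{L}$, the problem $\md[\Psi]$ already lies in $\mathbf{C}$, so the whole task is to show it is $\mathbf{C}$-hard. I would write $\varphi\equiv\forall y_1\cdots\forall y_p\,\theta$ with $\theta$ quantifier-free over the common vocabulary $\tau$ of $\Psi$ and $\varphi$, let $q$ be the number of atomic subformulas of $\theta$, and invoke Definition~\ref{def:family} with an index $k$ chosen at least as large as both the threshold $m$ of the family and $q$. This yields an $(n_k,k)$-universal problem $S_{n_k}\in\mathcal{F}$ over some vocabulary $\sigma$ that contains every $\sigma$-structure of cardinality below $n_k$. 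Because $S_{n_k}$ is $\mathbf{C}$-complete and $\leq_{fop}$ is transitive, it then suffices to build a fop $\rho$ with $\mathcal{A}\in S_{n_k}\iff\rho(\mathcal{A})\models\Psi$.

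My candidate for $\rho$ is the fop handed over by the $\mathbf{C}$-hardness of $\md[\Psi\wedge\varphi]$: since $S_{n_k}\in\mathbf{C}$, there is a fop $\rho\colon\struc[\sigma]\to\struc[\tau]$ with $\mathcal{A}\in S_{n_k}\iff\rho(\mathcal{A})\models\Psi\wedge\varphi$ for every $\mathcal{A}$. I would then argue that this same $\rho$ already reduces $S_{n_k}$ to $\md[\Psi]$. The forward implication is immediate. For the backward one I would assume $\rho(\mathcal{A})\models\Psi$ but, towards a contradiction, $\mathcal{A}\notin S_{n_k}$: if $\|\mathcal{A}\|<n_k$ then $\mathcal{A}\in S_{n_k}$ by the choice of $S_{n_k}$, already a contradiction; otherwise, with $m:=\|\mathcal{A}\|\geq n_k$, from $\rho(\mathcal{A})\models\Psi$ and $\rho(\mathcal{A})\not\models\Psi\wedge\varphi$ I obtain a tuple $\vec{u}$ over $|\rho(\mathcal{A})|$ with $\rho(\mathcal{A})\models\neg\theta(\vec{u})$.

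The crux is then a locality observation about projections. Because $\rho$ is a first-order \emph{projection}, the universe of $\rho(\mathcal{A})$ and all its numeric relations depend only on $m$, and the truth in $\rho(\mathcal{A})$ of any single atom $R_i(\vec{s})$, as well as the value of every $\tau$-constant, is determined by $m$ together with the truth of at most one $\sigma$-literal of $\mathcal{A}$ at a fixed tuple (plus the values of the $\sigma$-constants of $\mathcal{A}$). Hence evaluating $\theta(\vec{u})$ consults only $m$, the constants $c_1^{\mathcal{A}},\ldots,c_s^{\mathcal{A}}$, and a list $L_1(\vec{w}_1),\ldots,L_t(\vec{w}_t)$ of at most $t\leq q\leq k$ $\sigma$-literals satisfied by $\mathcal{A}$; in particular the conjunction $\bigwedge_j L_j(\vec{w}_j)\wedge\bigwedge_i(c_i=c_i^{\mathcal{A}})$ is $m$-consistent. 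Since $m\geq n_k$ and the list has length $\leq k$, the $(n_k,k)$-universality of $S_{n_k}$ produces a structure $\mathcal{B}\in S_{n_k}$ with $\|\mathcal{B}\|=m$ satisfying that same conjunction. As $\mathcal{B}$ agrees with $\mathcal{A}$ on everything $\theta(\vec{u})$ inspects and $|\rho(\mathcal{B})|=|\rho(\mathcal{A})|$ (so $\vec{u}$ is still a tuple of $\rho(\mathcal{B})$), I get $\rho(\mathcal{B})\models\neg\theta(\vec{u})$, hence $\rho(\mathcal{B})\not\models\Psi\wedge\varphi$, while $\mathcal{B}\in S_{n_k}$ forces $\rho(\mathcal{B})\models\Psi\wedge\varphi$ — the desired contradiction. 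Thus $\mathcal{A}\in S_{n_k}$, $\rho$ reduces a $\mathbf{C}$-complete problem to $\md[\Psi]$, and $\md[\Psi]$ is $\mathbf{C}$-complete, so $\forall\mathrm{FO}$ is superfluous with respect to $\mathbf{C}$.

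I expect the locality observation of the last paragraph to be the delicate point, and the only place the hypothesis really bites: it is essential that $\rho$ is a \emph{projection} rather than an arbitrary first-order reduction, since only then does satisfaction of the universal sentence $\varphi$ at a given tuple depend on boundedly many literals of the source structure, which is exactly what allows the fixed parameter $k$ of the universal member $S_{n_k}$ to be chosen uniformly \emph{after} $\varphi$ while still being available in $\mathcal{F}$ by Definition~\ref{def:family}. The careful bookkeeping — how many $\sigma$-literals and $\sigma$-constant conditions a single atom of $\theta$ can generate, and how to match that count against the $(n_k,k)$-universality definition — is where the argument has to be made precise.
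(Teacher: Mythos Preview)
The paper does not actually prove Theorem~\ref{theorem:borges}: it is quoted from \cite{borges3} with the citation in the theorem header, and the surrounding text only supplies the definitions (superfluity, $n$-consistency, $(n,t)$-universality, complete and universal family) needed to state it. So there is no in-paper proof to compare your proposal against.

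That said, your outline is exactly the argument the definitions are engineered to support, and it matches the shape of the proof in \cite{borges3}. The one place to tighten is the literal count. You bound the number of $\sigma$-literals by $q$, the number of atomic subformulas of $\theta$; but an atom of $\theta$ may involve $\tau$-constants, and each $\tau$-constant $c_j^{\rho(\mathcal{A})}$ is itself defined by a projective formula $\psi_j$, so pinning down its value may cost an additional $\sigma$-literal per constant occurrence. Likewise equalities between $\tau$-constants contribute. The fix is straightforward---take $k$ large enough to absorb the atoms of $\theta$ \emph{plus} the constant-defining conditions---and Definition~\ref{def:family} lets you choose $k$ after seeing $\varphi$ and $\tau$, so nothing breaks. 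Your closing paragraph already flags this bookkeeping as the point needing care; just make the bound on $k$ explicit rather than leaving it at $q$.
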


In \cite{borges3} it is also proved that universality and completeness are not intrinsically related concepts, 
due to the fact that there are {\bf NP}-complete problems that are not $(n,k)$-universal for sufficiently large $k$.

\section{Superfluity in the Second Level of PH}\label{sec:main}
\subsection{Superfluity in $\Sigma_2^p$}

We want to prove that $\forall\mathrm{FO}$ is superfluous with respect to $\Sigma_2^p$ applying 
\textbf{Theorem \ref{theorem:borges}}. Thus we have to prove that $\Sigma_2^p$ 
contains a complete and universal family $\mathcal{F}$. 
We will generate that family from 2CC.

\begin{definition}
If $S$ is a problem over a vocabulary $\sigma$, we define for each $n\in\mathbb{N}$:
\begin{equation} \label{eq:sn}
S_n := S \cup \{ \mathcal{A} \in \emph{STRUC}[\sigma]: ||\mathcal{A}|| < n \}
\end{equation}
and the family of problems 
\begin{equation}
\mathcal{F}(S) :=  \{ S_n \}_{n \geq 2}.
\end{equation}
\end{definition}

We will prove that $\mathcal{F}(\text{2CC})$ is an universal complete family in $\Sigma_2^p$
we first prove   
that 2CC is $(n_k,k)$-universal for some sequence $\set{n_k}_{k\geq 1}$.

%%%%%%He cambiado la prueba de este teorema. Por favor revisa.

\begin{lemma} \label{le:2cc}
$\cc$ is $(2k+1,k)$-universal for every $k\geq 1$.
\end{lemma}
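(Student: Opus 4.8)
Recall that the vocabulary of $\cc$ is $\langle G^2\rangle$: a single binary edge relation and no non-numeric constant symbols. Hence in Definition~\ref{def:universal} the constant part of~(\ref{eq-universal}) is empty, and what must be shown is that for every $m\geq 2k+1$, every list of edge-literals $L_1,\dots,L_k$ and every list of pairs $\vec u_1,\dots,\vec u_k\in m^2$, if the conjunction $\varphi\equiv\bigwedge_{j\leq k}L_j(\vec u_j)$ has a model of size $m$ at all, then it has one in $\cc$. The slack ``$2k+1$'' is what the argument consumes: the $k$ literals mention at most $2k$ vertices, so at least one vertex of the universe $m$ is left untouched, with, in fact, enough room to host a fixed gadget.

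The plan is to build the required model $\mathcal G$ by starting from a canonical positive instance of $\cc$ and editing it to realise the $k$ literals. Let $U\subseteq m$ be the set of (at most $2k$) vertices occurring in $\vec u_1,\dots,\vec u_k$, and pick a small constant-size set $W\subseteq m\setminus U$. On $W$, padded with further vertices of $m\setminus U$ attached in a harmless way, I would reproduce a yes-instance of $\cc$ --- e.g.\ the graph $\mathcal G_\psi$ of Example~\ref{example-2cc-complete} for a trivially satisfiable $\psi$ --- whose membership in $\cc$ is \emph{robust}, i.e.\ adding edges elsewhere or between $W$ and $U$ cannot move it out of $\cc$. Then I would fix the edges on $U$: include $\{a,b\}$ for each positive literal $E(a,b)$, omit $\{a,b\}$ for each negative literal $\neg E(a,b)$, and choose the remaining free edges (inside $U$ and between $U$ and $W$) so that the gadget's certificate still works. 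That $\mathcal G\models\varphi$ is then immediate, provided no pair is simultaneously required and forbidden once the symmetry convention on $G$ is accounted for --- and that is exactly the hypothesis that $\varphi$ is $m$-consistent. That $\mathcal G\in\cc$ reduces to the gadget on $W$: whatever the existential component of $\cc$ ranges over, the gadget supplies a fixed witness, and the universal condition it must certify is met because the gadget is self-contained and the at most $k$ edges we were forced to put on $U$ are absorbed by the spare vertices that $m\geq 2k+1$ guarantees.

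The step I expect to be the real obstacle is this last one: proving that the imposed literal edges among the vertices of $U$, together with whatever connector edges are used, cannot conspire to destroy membership in $\cc$. This requires the actual combinatorics behind the reduction of Example~\ref{example-2cc-complete} --- the way the $\exists$-choice interacts with the universal property it certifies --- and it is also where the bound $2k+1$ is used sharply, since one needs enough vertices outside $U$ both to carry the forcing gadget and to soak up the mandated edges without creating an unavoidable obstruction. Once the lemma is established, Lemma~\ref{lemma:universal} yields $(2k+1,k')$-universality for every $k'\leq k$ and $(n,k)$-universality for every $n\geq 2k+1$ for free, which is what the construction of the family $\mathcal{F}(\cc)$ in the sequel needs.
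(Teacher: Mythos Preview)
Your proposal is a sketch rather than a proof, and its central mechanism does not fit the bound. You observe that $m\geq 2k+1$ leaves at least one vertex outside $U$, but then ask for a ``small constant-size set $W\subseteq m\setminus U$'' to host a gadget copied from Example~\ref{example-2cc-complete}. Those gadgets have $6n$ vertices for a formula on $n$ variables; even a trivial $\psi$ requires several, whereas $2k+1$ guarantees \emph{exactly one} spare vertex in the worst case. So either the gadget route needs a much larger threshold than $2k+1$, or the one spare vertex must be used in a way you have not described. Your robustness assumption is also too optimistic: adding edges to a yes-instance of $\cc$ can certainly create a new maximal clique that is monochromatic under the old certificate, so ``the gadget supplies a fixed witness'' is not automatic once you start wiring $U$ to $W$.

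The paper's argument avoids gadgets entirely and uses the tight bound as follows. Take the \emph{largest} graph $\mathcal G$ on $m$ satisfying the $k$ literals, i.e.\ the complete graph with the (at most $k$) forbidden pairs removed. Colour a vertex red iff it does not occur in any \emph{negative} literal; since at most $2k$ vertices are so affected and $m\geq 2k+1$, the red set $R$ is nonempty. The red vertices induce a clique, but it is never maximal because every red vertex is adjacent to every blue vertex (no forbidden pair touches a red vertex); likewise every blue clique extends to any red vertex. Hence every maximal clique is bichromatic and $\mathcal G\in\cc$. The single extra vertex from $2k+1$ is exactly what makes $R\neq\emptyset$; no separate gadget is needed.
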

\begin{proof} Recall $\sigma_g=\langle E^2 \rangle$ is the vocabulary for graphs. Let $k$ be a natural number and let $m\geq 2k+1$. 
We need to verify that for every sequence of $m$-consistent literals over $\sigma_g$, let's say,
\begin{equation}\label{eq:conditions}
L_1(u_1,v_1), L_2(u_2,v_2), \ldots , L_k(u_k,v_k),
\end{equation}
there is $m$-consistency in $\cc$. For every $1\leq i \leq k$, $L_i$ is either $E$ or $\neg E$, and $u_i,v_i \in m$. 
These $k$ conditions are consistent if and only if there are no loops and there's no pair $(u,v)$ and indexes $i\neq j$ such that 
\begin{equation}
L_i(u,v) \equiv E(u,v)  \quad \textrm{and} \quad L_j(u,v)\equiv \neg E(u,v).
\end{equation}

The following analysis is done under the supposition that (\ref{eq:conditions}) is a consistent sequence of conditions. 
If every literal in (\ref{eq:conditions}) is positive, that is,
\begin{equation}\label{eq:poscond}
E(u_1,v_1), E(u_2,v_2), \ldots, E(u_k,v_k)
\end{equation}
then the complete graph on $m$ nodes is a model of (\ref{eq:poscond}), but this is also a positive instance of $\cc$, 
because a complete graph has only one maximal clique (itself), and we can choose a coloration in order to obtain a nonmonochromatic complete graph. 

If there are negative literals in (\ref{eq:conditions}), we can rearrange the sequence so that every negative literal appears at the end:
\begin{equation}\label{eq:negcond}
E(u_1,v_1), \ldots,E(u_j,v_j),\neg E(u_{j+1},v_{j+1}),\ldots , \neg E(u_k,v_k),
\end{equation}
for some $0\leq j \leq k-1$ (if $j=0$, that means every literal in the sequence is negative). 
Let $\mathcal{G} = \langle m,E^{\mathcal{G}} \rangle$ be the \textit{biggest} graph that satisfies (\ref{eq:negcond})
i.e.
\[
 E^{\mathcal{G}}=\setdef{(a,b)\in m\times m}{a\not=b\text{ and } (a,b)\not=(u_i,v_i)\text{ for }j<i\leq k}
\]
The graph $\mathcal{G}$ is a positive instance of $\cc$. The following coloration will certify it. 
Let $R$ be the set of every node that does not appear in a negative condition in sequence (\ref{eq:negcond}), that is,
\begin{equation}
R = \{ x\in m : x\not\in \{ u_i,v_i \} \textrm{ for every } j<i\leq k \}.
\end{equation}
Vertices in $R$ are red and vertices in $R^c$ are blue. 

The set $R$ is nonempty because in the worst case (when $j=0$) there might be at most $2k$ different nodes affected by (\ref{eq:negcond}), 
and this set cannot be $m$ either because we are assuming there are negative conditions. 
%Now, if $x\in R$ and $y \not\in R$, the pair $(x,y)$ is an edge of $\mathcal{G}$, by construction.
Notice that the subgraph induced by $R$ is a clique and every other red clique is completely contained in $R$. 

Now the subgraph induced by $R$ is the maximal red clique, but it is not a maximal clique since there are edges joining every vertex in $R$ with every vertex in $R^c$. This very same argument shows there is no blue maximal clique since every blue clique is contained in $R^c$. Hence, there is no maximal clique with all vertices with the same color, which means that sequence (\ref{eq:negcond}) is consistent in $\cc$.
\qed 
\end{proof}

\begin{corollary}\label{cor:universality}
Given any natural number $n\geq 2$,
$\cc_n$ is $(2k+1,k)$-universal for every $k\geq 1$.
\end{corollary}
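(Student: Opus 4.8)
The plan is to deduce the corollary directly from Lemma \ref{le:2cc} together with the monotonicity part of Lemma \ref{lemma:universal}. First I would note that, by the definition of $\cc_n$ in (\ref{eq:sn}), we have the inclusion $\cc \subseteq \cc_n$ for every $n\geq 2$. Lemma \ref{le:2cc} gives that $\cc$ is $(2k+1,k)$-universal for every $k\geq 1$, and the second assertion of Lemma \ref{lemma:universal} states that if $S$ is $(n,k)$-universal and $S\subseteq T$ then $T$ is $(n,k)$-universal. Instantiating this with $S=\cc$ and $T=\cc_n$ yields exactly the claim.

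For completeness I would unwind why monotonicity applies in this setting. Fix $k\geq 1$ and $m\geq 2k+1$, and take a sequence of $\sigma_g$-literals $L_1,\ldots,L_k$ and tuples $\vec{u}_1,\ldots,\vec{u}_k$ such that $\bigwedge_j L_j(\vec{u}_j)$ is $m$-consistent (there are no constant symbols in $\sigma_g=\langle E^2\rangle$, so the clause $\bigwedge(c_k=b_k)$ of (\ref{eq-universal}) is empty). By Lemma \ref{le:2cc} this conjunction is $m$-consistent in $\cc$, i.e.\ there is a graph $\mathcal{G}$ with $||\mathcal{G}||=m$, $\mathcal{G}\in\cc$ and $\mathcal{G}\models\bigwedge_j L_j(\vec{u}_j)$. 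Since $\cc\subseteq\cc_n$, the same $\mathcal{G}$ witnesses $m$-consistency in $\cc_n$, which is what is required.

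There is no genuine obstacle here: once Lemma \ref{le:2cc} is available the corollary is immediate. The only point worth remarking is that the extra structures placed in $\cc_n$, namely those of cardinality below $n$, are irrelevant to $(2k+1,k)$-universality, which only ranges over cardinalities $m\geq 2k+1$; and in any case enlarging a class can only make the implication ``$m$-consistency $\Rightarrow$ $m$-consistency in the class'' easier to satisfy, which is precisely the content of the monotonicity clause of Lemma \ref{lemma:universal}.
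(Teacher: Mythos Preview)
Your proof is correct and follows exactly the same approach as the paper: invoke Lemma~\ref{le:2cc}, note that $\cc\subseteq\cc_n$, and apply the monotonicity clause of Lemma~\ref{lemma:universal}. The paper's proof is a one-line reference to these same facts; your additional unwinding of the definition is fine but not required.
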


\begin{proof}
Direct from Lemmas \ref{lemma:universal} and \ref{le:2cc} and
the fact that $\cc \subseteq \cc_n$ for every $n\in\mathbb N$. \qed
\end{proof}

\begin{lemma}\label{le:belong}
For every natural number $n\geq 2$ the problem 
$\cc_n$ belongs to $\Sigma_2^p$.
\end{lemma}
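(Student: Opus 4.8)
The plan is to exhibit a single $\textrm{SO}_2$ sentence defining $\cc_n$. Since $\cc \in \Sp$ (it is listed in Theorem~\ref{theorem-resumen}, and is in any case well known to lie in $\Sp$), there is a sentence $\Phi \equiv \exists \vec{R}\,\forall \vec{T}\,\varphi$ in $\textrm{SO}_2[\sigma_g]$, with $\varphi$ first order, such that $\cc = \md[\Phi]$. So the only extra ingredient I need to handle is the ``small universe'' part $\{\mathcal{A} \in \struc[\sigma_g] : ||\mathcal{A}|| < n\}$ that is adjoined to $\cc$ to form $\cc_n$.

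First I would observe that, because $n$ is a fixed constant, the property $||\mathcal{A}|| < n$ is first-order definable by a sentence of fixed finite length, for instance
\[
\chi_n \;\equiv\; \exists x_1 \cdots x_{n-1}\; \forall y \bigvee_{i=1}^{n-1} (y = x_i),
\]
which asserts that the universe has at most $n-1$ elements; equivalently one may use the numeric sentence stating $\max \le n-2$. Then $\cc_n = \md[\Phi] \cup \md[\chi_n] = \md[\Phi \vee \chi_n]$, and the final step is to check that $\Phi \vee \chi_n$ can be written in $\textrm{SO}_2$ form. Since $\chi_n$ contains no second-order variables, the prefix $\exists \vec{R}\,\forall \vec{T}$ of $\Phi$ can be pulled out in front of the whole disjunction with no variable clash, giving $\Phi \vee \chi_n \equiv \exists \vec{R}\,\forall \vec{T}\,(\varphi \vee \chi_n)$, whose matrix $\varphi \vee \chi_n$ is first order. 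Hence $\Phi \vee \chi_n \in \textrm{SO}_2[\sigma_g]$ and $\cc_n \in \Sp$.

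No step here is a genuine obstacle; the one subtlety worth stating explicitly is the quantifier manipulation in the last step — it works precisely because $\chi_n$ is purely first order, so no universal second-order block has to absorb an existential one (which would fail in general) nor vice versa. As an even shorter alternative I could bypass this manipulation altogether and appeal to closure properties: $\md[\chi_n]$ is first-order definable, hence lies in $\mathbf{L} \subseteq \Sp$, and $\Sp$ (being a ``nice'' class closed under finite unions, as noted in Section~\ref{sec:prelim}) is closed under the finite union $\cc \cup \md[\chi_n] = \cc_n$.
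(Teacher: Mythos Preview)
Your proof is correct and follows essentially the same approach as the paper: disjoin the $\textrm{SO}_2$ sentence for $\cc$ with a fixed first-order sentence expressing ``cardinality $<n$'' (the paper writes this as $\bigvee_{1\le\ell<n}\varphi_\ell$ with each $\varphi_\ell$ saying ``cardinality is exactly $\ell$'', whereas you use a single $\chi_n$, but these are equivalent). You are in fact slightly more explicit than the paper about why the resulting disjunction stays in $\textrm{SO}_2$.
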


\begin{proof}
Padding a problem as in equation (\ref{eq:sn}) does not affect its complexity: if $S$ is in $\Sigma_k^p$, then it has a defining sentence $\Phi$ in $\textrm{SO}_k$. For every $\ell$ consider the FO sentence
\[
\varphi_\ell:=\exists x_1\ldots\exists x_\ell\,\forall y\;
\left[\bigwedge_{1\leq i< j\leq \ell} x_i\not= x_j\right]\land
\left[\bigvee_{1\leq i\leq \ell} y=x_i\right]
\] 
It is clear that a structure satisfies $\varphi_\ell$ iff its cardinality is exactly $\ell$.
Thus the sentence 
\[
\Phi\lor\left[\bigvee_{1\leq \ell<n}\varphi_\ell\right]
\]
defines $S_n$ and it is a sentence in $\textrm{SO}_k$ so
$S_n$ is still in $\Sigma_k^p$ for every $n\in\mathbb{N}$. Therefore, $\mathcal{F}(S)$ is a family in $\Sigma_k^p$. \qed
\end{proof}

\begin{lemma}\label{le:2ccnHardness}
For every natural number $n\geq 2$ the problem 
$\cc_n$ is $\Sigma_2^p$-hard.
\end{lemma}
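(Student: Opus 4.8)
The plan is to establish $\cc\leq_{fop}\cc_n$ and then appeal to transitivity of $\leq_{fop}$: Example~\ref{example-2cc-complete} provides $\efsat\leq_{fop}\cc$ and Proposition~\ref{theorem-sat} gives that $\efsat$ is $\Sp$-hard, so from $\efsat\leq_{fop}\cc\leq_{fop}\cc_n$ we conclude that $\cc_n$ is $\Sp$-hard. The feature of $\cc_n$ that makes this work is that it agrees with $\cc$ on every structure of cardinality at least $n$ (the padding only adds structures of cardinality $<n$). Hence it suffices to build a fop $\rho:\struc[\sigma_g]\to\struc[\sigma_g]$ such that $\mathcal G\in\cc$ iff $\rho(\mathcal G)\in\cc$, and such that $\|\rho(\mathcal G)\|\geq n$ for every graph $\mathcal G$; for then $\mathcal G\in\cc\iff\rho(\mathcal G)\in\cc\iff\rho(\mathcal G)\in\cc_n$, so $\rho$ reduces $\cc$ to $\cc_n$.

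I would take $\rho(\mathcal G)$ to be a disjoint union of many copies of $\mathcal G$. Concretely, fix once and for all an arity $k$ with $2^{k}\geq n$ and set $\rho=\lambda_{\vec x\vec y}\langle\tr,\varphi_E\rangle$, where $\vec x=\langle x_1,\ldots,x_k\rangle$, $\vec y=\langle y_1,\ldots,y_k\rangle$ and
\[
\varphi_E(\vec x,\vec y)\;\equiv\;(x_2=y_2\wedge\cdots\wedge x_k=y_k)\wedge E(x_1,y_1).
\]
Since $\tr$ is numeric and $\varphi_E$ is the conjunction of a numeric formula with a single $\sigma_g$-literal, $\rho$ is a fop. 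On a graph $\mathcal G$ with $\|\mathcal G\|=m$ it produces a graph on $m^{k}$ vertices consisting of $m^{k-1}$ vertex-disjoint copies of $\mathcal G$: two vertices lie in the same copy exactly when their last $k-1$ coordinates coincide, and the subgraph induced on a copy is $\mathcal G$. As every structure has at least two elements, $\|\rho(\mathcal G)\|=m^{k}\geq 2^{k}\geq n$, which is the size requirement.

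The equivalence $\mathcal G\in\cc\iff\rho(\mathcal G)\in\cc$ would follow from the fact that the property defining $\cc$ is preserved by disjoint unions: the maximal cliques of a disjoint union are precisely the maximal cliques of its summands, and a $2$-colouring of a disjoint union is nothing but a tuple of $2$-colourings of the summands, avoiding monochromatic maximal cliques globally iff it does so on each summand. Hence a graph lies in $\cc$ iff each of its connected components does, so $\mathcal G\in\cc$ iff its $m^{k-1}$ copies lie in $\cc$, i.e. iff $\rho(\mathcal G)\in\cc$. I expect this step to be routine once the definition of $\cc$ from \cite{marx,marx2} is unwound. The point that needs genuine care is the uniformity behind the size bound: the reduction, and in particular its arity $k$, must be chosen before the input is seen, so $k$ has to be large enough to inflate even the finitely many ``small'' graphs---those of cardinality between $2$ and $n-1$---past cardinality $n$; choosing $k$ with $2^{k}\geq n$ is exactly what secures this. (I am using the standard convention that every structure has at least two elements; if cardinality-$1$ structures are admitted they are handled directly, being members of both $\cc$ and $\cc_n$.)
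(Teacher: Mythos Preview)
Your proposal is correct and follows essentially the same approach as the paper: both reduce $\cc$ to $\cc_n$ via a fop that sends a graph to a disjoint union of copies of itself, then argue that $\cc$ is preserved under disjoint unions and that the blown-up graph has cardinality at least $n$. The only cosmetic difference is in the indexing of the copies: the paper fixes a number $k$ with $2k>n$ and encodes the copy index in binary using $\{0,1\}$-valued coordinates (so the universe formula is a nontrivial numeric disjunction and the number of copies is constant), whereas you let the last $k-1$ coordinates range freely over $|\mathcal G|$ (so the universe formula is $\tr$ and the number of copies is $m^{k-1}$); both are legitimate fops and both achieve the size bound via $\|\mathcal G\|\geq 2$.
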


\begin{proof}
For every $n \in \mathbb{N}$ we define a fop $\rho_n$ such that
for every simple graph $\mathcal{G}$
\begin{equation}\label{eq:rhon}
\mathcal{G} \in \cc \quad \iff \quad \rho_n(\mathcal{G}) \in \cc_n.
\end{equation}

Given any graph $\mathcal{G} \in \cc$ we want its image $\rho_n(\mathcal{G})$ to have cardinality at least $n$, since otherwise it belongs to $\cc_n$ by definition. 
The reduction will consist in \textit{padding} $\mathcal{G}$ keeping its basic structure as shown in the image below. 

        Let $k$ be the minimum integer such that $2k>n$. This is enough because every structure has at least two elements. 
        For every simple graph $\mathcal{G}$ its image $\rho_n(\mathcal{G})$ consists of $k$ disconnected copies of $\mathcal{G}$.
Notice that any maximal clique in $\rho_n(\mathcal{G})$ has exactly the same cardinality as a maximal clique in $\mathcal{G}$ and
that any coloring of the vertices in $\rho_n(\mathcal{G})$ is obtained by $k$ independent colorings of the vertices in $\mathcal{G}$, so this map clearly satisfies property (\ref{eq:rhon}). 

\

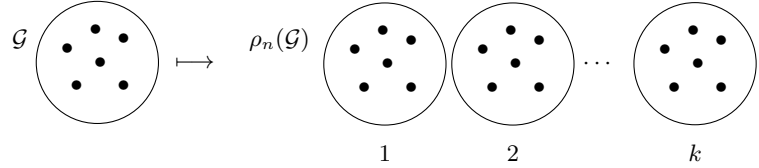
\begin{figure}
\begin{center}
		\begin{tikzpicture}[scale=1.25]
			\node (O) at (0,0){$\bullet$};
			\node (A) at (0.5,0.5){$\bullet$};
			\node (B) at (0.25,0.25){$\bullet$};
			\node (C) at (0.5,0){$\bullet$};
			\node (D) at (0.2,0.6){$\bullet$};
			\node (E) at (-0.1,0.4){$\bullet$};
			\node (G) at (-.6,.5){$\mathcal{G}$};
			\node (Num) at (.22,-.8){$ $};
			\draw (.22,.25) circle (.65cm);
		\end{tikzpicture}
			\begin{tikzpicture}[scale=1.25]					
			\node (Arr) at (-1.8,0.25){$\longmapsto$};
			\node (O) at (0,0){$\bullet$};
			\node (A) at (0.5,0.5){$\bullet$};
			\node (B) at (0.25,0.25){$\bullet$};
			\node (C) at (0.5,0){$\bullet$};
			\node (D) at (0.2,0.6){$\bullet$};
			\node (E) at (-0.1,0.4){$\bullet$};
			\node (G) at (-.9,.5){$\rho_n(\mathcal{G})$};
			\node (Num) at (.22,-.7){$1$};
			\draw (.22,.25) circle (.65cm);
			\end{tikzpicture}\begin{tikzpicture}[scale=1.25]				
			\node (Arr) at (-.4,0.25){$ $};
			\node (O) at (0,0){$\bullet$};
			\node (A) at (0.5,0.5){$\bullet$};
			\node (B) at (0.25,0.25){$\bullet$};
			\node (C) at (0.5,0){$\bullet$};
			\node (D) at (0.2,0.6){$\bullet$};
			\node (E) at (-0.1,0.4){$\bullet$};
			\node (Num) at (.22,-.7){$2$};
			\draw (.22,.25) circle (.65cm);
			\end{tikzpicture}\begin{tikzpicture}[scale=1.25]
			\node (Arr) at (-.8,0.25){$\cdots$};
			\node (O) at (0,0){$\bullet$};
			\node (A) at (0.5,0.5){$\bullet$};
			\node (B) at (0.25,0.25){$\bullet$};
			\node (C) at (0.5,0){$\bullet$};
			\node (D) at (0.2,0.6){$\bullet$};
			\node (E) at (-0.1,0.4){$\bullet$};
			\node (Num) at (.22,-.7){$k$};
			\draw (.22,.25) circle (.65cm);
			\end{tikzpicture}
            \caption{Reduction $\rho_n$, that assigns to every simple graph $\mathcal{G}$ $k$ copies of itself}
\end{center}
		\end{figure}

It is easy to see that $\rho_n$ is a projection. The arity of $\rho_n$ can be settled as $a=\log(k-1)+2$.
The elements of $|\rho_n(\mathcal{G})|$ are $a$-tuples $\vec{u}=(u_0,\ldots, u_{a-1})$
where the first $a-1$ coordinates $u_0,\ldots, u_{a-2}$ code in binary a number between $0$ and $k-1$ 
which identifies a copy of $\mathcal{G}$ and $u_{a-1}$ is any element of $|\mathcal G|$.
We have an edge between $\vec{u}=(u_0,\ldots, u_{a-1})$ and $\vec{v}=(v_0,\ldots, v_{a-1})$ if and only if
$u_0,\ldots, u_{a-2}$ and $v_0,\ldots, v_{a-2}$ represent the same number in binary and $(u_{a-1},v_{a-1})$
is an edge in $\mathcal G$.

Formally, consider $\rho_n$ as a map from $\struc[E]$ to $\struc[Q]$ where $E$ and $Q$ are binary relation symbols.
Consider first formulas
\[
\theta_j(x_0,\ldots, x_{a-1}):=x_0=\ell_0\land\ldots\land x_{a-2}=\ell_{a-2}
\]
where $\ell_i$ is 0 or 1 according to the digit in the corresponding position of the binary representation of $j$
i.e.
\begin{alignat*}{2}
\theta_0(\vec x) & :=x_0=0\land x_1=0 \land\ldots\land x_{a-3}=0\land x_{a-2}=0\\
\theta_1(\vec x) & :=x_0=0\land x_1=0 \land\ldots\land x_{a-3}=0\land x_{a-2}=1\\
\theta_2(\vec x) & :=x_0=0\land x_1=0 \land\ldots\land x_{a-3}=1\land x_{a-2}=0\\
\theta_3(\vec x) & :=x_0=0\land x_1=0 \land\ldots\land x_{a-3}=1\land x_{a-2}=1\\
	& \vdots
\end{alignat*} 
and so on.
Then
\[
\varphi_0(\vec x):= \theta_0(\vec x)\lor\ldots\lor \theta_{k-1}(\vec x)
\]
is a numeric formula and defines the universe of $\rho_n(\mathcal{G})$ and 
\[
\varphi_1(\vec x, \vec y):= (x_0=y_0\land\ldots\land x_{a-2}=y_{a-2})\land E(x_{a-1},y_{a-1})
\]
defines the binary relation $Q^{\rho_n(\mathcal{G})}$.

 \qed
\end{proof}

\begin{theorem}\label{theorem:family} $\mathcal{F}(\cc)$ is a complete and universal family in $\Sigma_2^p$.
\end{theorem}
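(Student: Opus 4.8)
The plan is to verify the two conditions of Definition \ref{def:family} directly; essentially all of the technical content has already been isolated in the preceding lemmas, so this amounts to a packaging step. For condition (1), every member of $\mathcal{F}(\cc)=\{\cc_n\}_{n\geq 2}$ must be $\Sigma_2^p$-complete: membership in $\Sigma_2^p$ is Lemma \ref{le:belong} and $\Sigma_2^p$-hardness is Lemma \ref{le:2ccnHardness}, and together they give $\Sigma_2^p$-completeness of $\cc_n$ for each $n\geq 2$. (Only the padded versions occur in the family, so the bare $\cc$ --- which is $\Sigma_2^p$-complete by Theorem \ref{theorem-resumen} --- is not formally needed here.)

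For condition (2), I would exhibit the witnessing data explicitly: take the sequence $n_k:=2k+1$ with threshold $m:=1$, and for each $k\geq 1$ set $S_{n_k}:=\cc_{2k+1}\in\mathcal{F}(\cc)$. By the definition in equation (\ref{eq:sn}), $\cc_{2k+1}$ contains every $\sigma_g$-structure of cardinality strictly less than $2k+1=n_k$, which is exactly the required containment clause. Corollary \ref{cor:universality} states that $\cc_{2k+1}$ is $(2k+1,k)$-universal, i.e. $(n_k,k)$-universal, which is the required universality. (By Lemma \ref{lemma:universal} one may replace each $n_k$ by any larger value without losing $(n_k,k)$-universality, so the choice is not rigid; the point is merely to make a single parameter serve both as the universality index and as the padding threshold.) Combining (1) and (2) with Definition \ref{def:family} then yields the theorem, and hence, via Theorem \ref{theorem:borges}, the superfluity result in the next subsection.

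I do not anticipate a genuine obstacle here. All of the real difficulty resides in Lemma \ref{le:2cc} (the universality computation for $\cc$) and Lemma \ref{le:2ccnHardness} (the padding reduction establishing $\Sigma_2^p$-hardness of $\cc_n$), both already established; the present theorem simply assembles them into the hypothesis format demanded by Theorem \ref{theorem:borges}. The one mild subtlety is the bookkeeping --- verifying that the single choice $n_k=2k+1$ simultaneously meets the universality bound coming from Corollary \ref{cor:universality} and the ``contains all structures of cardinality $<n_k$'' requirement of Definition \ref{def:family} --- which it does by construction of $\cc_{n}$.
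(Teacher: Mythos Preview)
Your proposal is correct and follows essentially the same route as the paper: both assemble Corollary \ref{cor:universality}, Lemma \ref{le:belong}, and Lemma \ref{le:2ccnHardness} to verify the two clauses of Definition \ref{def:family}. If anything you are more explicit than the paper in naming the witnessing data $n_k=2k+1$, $m=1$, $S_{n_k}=\cc_{2k+1}$ and in checking the ``contains all structures of cardinality $<n_k$'' clause via equation~(\ref{eq:sn}).
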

\begin{proof} 
By Corollary \ref{cor:universality} every problem in $\mathcal{F}(\cc)$ is $(2k+1,k)$-universal for every natural number $k\geq 1$.
By Lemma Lemma \ref{le:belong} every problem in $\mathcal{F}(\cc)$ belongs to $\Sigma_2^p$ and by Lemma \ref{le:2ccnHardness} every problem in $\mathcal{F}(\cc)$ is $\Sigma_2^p$-hard hence every problem in the family is $\Sigma_2^p$-complete.

Therefore $\mathcal{F}(\cc)$ is a complete and universal family in $\Sigma_2^p$.

\qed
\end{proof}

As a consequence of \textbf{Theorems \ref{theorem:borges}} and \textbf{\ref{theorem:family}} we have the following:

\begin{theorem}\label{theorem:superfluous}
$\forall\mathrm{FO}$ is superfluous with respect to $\Sigma_2^p$.
\end{theorem}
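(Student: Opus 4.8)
The plan is to obtain the statement as a direct application of Theorem \ref{theorem:borges}. That theorem reduces the claim to two things: exhibiting a logic $\mathcal{L}$ that captures $\Sigma_2^p$ and contains $\mathrm{FO}$, and showing that $\Sigma_2^p$ contains a complete and universal family. Once both are in place, Theorem \ref{theorem:borges} delivers the superfluity of $\forall\mathrm{FO}$ with respect to $\Sigma_2^p$ with no further argument.

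For the capturing logic I would take $\mathcal{L}=\textrm{SO}_2$. By the definition of the levels of the Polynomial-Time Hierarchy in Section \ref{sec:prelim}, $\Sigma_2^p=\{\,\textrm{MOD}[\varphi] : \varphi\in\textrm{SO}_2\,\}$, so $\textrm{SO}_2$ captures $\Sigma_2^p$; and $\mathrm{FO}\subseteq\textrm{SO}_2$ since any first-order sentence is already an $\textrm{SO}_2$ sentence (with an empty, or dummy, block of second-order quantifiers). I would also record that $\textrm{SO}_2$ satisfies the two closure properties built into the notion of a capturing logic. Both are routine prenex manipulations: after renaming the quantified relation symbols of two sentences apart, $(\exists\vec R_1\forall\vec R_2\,\varphi_1)\lor(\exists\vec S_1\forall\vec S_2\,\varphi_2)$ is equivalent to $\exists\vec R_1\vec S_1\,\forall\vec R_2\vec S_2\,(\varphi_1\lor\varphi_2)$ — each matrix mentions only its own universally quantified relations, so pulling the $\forall$-blocks inward is sound — and for $\psi\in\mathrm{FO}$ one has $(\exists\vec R_1\forall\vec R_2\,\varphi)\land\psi\equiv\exists\vec R_1\forall\vec R_2\,(\varphi\land\psi)$; both resulting sentences are again in $\textrm{SO}_2$.

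For the second ingredient there is nothing left to prove: Theorem \ref{theorem:family} already states that $\mathcal{F}(\cc)$ is a complete and universal family in $\Sigma_2^p$. Instantiating Theorem \ref{theorem:borges} at $\mathbf{C}=\Sigma_2^p$, $\mathcal{L}=\textrm{SO}_2$ and $\mathcal{F}=\mathcal{F}(\cc)$ then finishes the proof.

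I do not expect a genuine obstacle in this final step — it is essentially bookkeeping. The real weight of the development sits upstream, in Lemma \ref{le:2cc} (the $(2k+1,k)$-universality of $\cc$) and Lemma \ref{le:2ccnHardness} (the $\Sigma_2^p$-hardness of each padded problem $\cc_n$ via first-order projections), which are what make Theorem \ref{theorem:family}, and hence this corollary, available. The only point here that warrants any care is the purely formal one of verifying that the hypotheses of Theorem \ref{theorem:borges} are met literally, in particular the inclusion $\mathrm{FO}\subseteq\mathcal{L}$ for the chosen capturing logic.
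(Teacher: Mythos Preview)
Your proposal is correct and follows the same route as the paper, which simply records the theorem as an immediate consequence of Theorems \ref{theorem:borges} and \ref{theorem:family}. You add a bit of extra care in checking that $\textrm{SO}_2$ meets the closure hypotheses on $\mathcal{L}$ and that $\mathrm{FO}\subseteq\textrm{SO}_2$, but this is harmless bookkeeping and does not change the argument.
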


We proceed now to show the practicality of the last result. 
Let the vocabulary $\tau = \langle P^2,N^2,V^2,K^1 \rangle$. A $\tau$-structure $\mathcal{A}$ is an instance of $\vcsat$ when a proper interpretation of $\tau$ symbols is given. For every $i,j \in |\mathcal{A}|$,
\begin{itemize}
\item $\mathcal{A} \models P(i,j)$ iff Boolean variable $j$ appears positively in  implicant $i$ of $\mathcal{A}$.
\item $\mathcal{A} \models N(i,j)$ iff Boolean variable $j$ appears negatively in implicant $i$ of $\mathcal{A}$.
\item $\mathcal{A} \models V(i,j)$ iff there is a 1 in bit $j$ in the binary codification of $v_i$.
\item $\mathcal{A} \models K(i)$ iff there is a 1 in bit $i$ in the binary codification of the cost.
\end{itemize}

First two items means that $\mathcal{A}$ is a Boolean formula in DNF. Let $\Psi$ be the $\mathrm{SO}_2[\tau]$ formula that characterizes $\vcsat$, which we need  to prove the hardness of this problem. 

\begin{proposition}
$\vcsat$ is $\Sigma_2^p$-hard. 
\end{proposition}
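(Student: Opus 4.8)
The plan is to derive the statement as an application of Theorem~\ref{theorem:superfluous}. Recall that $\vcsat=\md[\Psi]$ with $\Psi\in\textrm{SO}_2[\tau]$, and that $\forall\mathrm{FO}$ is superfluous with respect to $\Sp$. Hence it suffices to produce a sentence $\varphi\in\forall\mathrm{FO}[\tau]$ for which $\md[\Psi\wedge\varphi]$ is $\Sp$-complete: Theorem~\ref{theorem:superfluous} then yields that $\md[\Psi]=\vcsat$ is $\Sp$-complete, and in particular $\Sp$-hard. Note that $\Psi\wedge\varphi$ is again an $\textrm{SO}_2[\tau]$ sentence, since $\varphi$ is purely first-order and so the second-order prefix of $\Psi$ can be pulled out in front; thus $\md[\Psi\wedge\varphi]\in\Sp$ automatically, and only the $\Sp$-hardness of $\md[\Psi\wedge\varphi]$ needs to be argued.

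For $\varphi$ I would take a universal first-order sentence over $\tau$ that pins the numeric fields $V$, $K$ of an instance to a fixed simple configuration on which the value-and-cost bookkeeping becomes trivial and $\vcsat$ collapses to a pure $\exists\forall$-satisfiability question about the underlying DNF; for concreteness, the sentence forcing every implicant to have value $1$ and the cost to be $1$,
\[
\varphi\ \equiv\ (\forall i)(\forall j)\big[\,V(i,0)\wedge(j=0\vee\neg V(i,j))\,\big]\ \wedge\ K(0)\ \wedge\ (\forall j)\big[\,j=0\vee\neg K(j)\,\big],
\]
which is plainly in $\forall\mathrm{FO}[\tau]$. I would then make the collapse precise by constructing a first-order projection $\rho$ from $\efsat$, which is $\Sp$-hard via fops by Proposition~\ref{theorem-sat}, into $\md[\Psi\wedge\varphi]$, with the property that $\rho(\mathcal A)\models\varphi$ for every $\sigfnd$-structure $\mathcal A$ (so that the same $\rho$ is simultaneously a fop into $\vcsat$ and into $\md[\Psi\wedge\varphi]$). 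The map $\rho$ would copy the DNF of $\mathcal A$ into the relations $P$ and $N$ (i.e. $P\leftarrow Q$, $N\leftarrow M$) and write the $V$, $K$ fields through the fixed numeric formulas dictated by $\varphi$; the one genuinely substantive ingredient is to encode the feature that makes $\vcsat$ a $\Sp$ problem — the quantifier alternation, i.e. the image of the existential/universal marking $E$ of the $\efsat$-variables — into the value-and-cost mechanism of $\vcsat$. This is the step I expect to be the main obstacle: it is exactly where the definition of $\vcsat$ from Appendix~\ref{app:problems} enters, and where the $\exists\forall$ alternation of $\Sp$ must be realised combinatorially in the spirit of the value-and-cost constructions of \cite{berit}; if the value-$1$/cost-$1$ configuration turns out to trivialise $\vcsat$ too far, one adjusts $\varphi$ to a slightly richer but still universal configuration. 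Once this encoding is settled, the correctness equivalence $\mathcal A\in\efsat\iff\rho(\mathcal A)\in\vcsat$ and the identity $\rho(\mathcal A)\models\varphi$ are routine verifications.

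Given such a $\rho$, transitivity of $\leq_{fop}$ gives $\efsat\leq_{fop}\md[\Psi\wedge\varphi]$, so $\md[\Psi\wedge\varphi]$ is $\Sp$-hard; combined with the membership observation above, it is $\Sp$-complete. Applying Theorem~\ref{theorem:superfluous} with this $\varphi$ then transfers $\Sp$-completeness from $\md[\Psi\wedge\varphi]$ to $\md[\Psi]=\vcsat$, and in particular $\vcsat$ is $\Sp$-hard, as required.
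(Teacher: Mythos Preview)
Your high-level strategy matches the paper's exactly: exhibit a universal first-order $\varphi$ such that $\md[\Psi\wedge\varphi]$ is $\Sp$-hard (via a fop from $\efsat$), then invoke Theorem~\ref{theorem:superfluous}. The gap is in the concrete $\varphi$ you propose. With every variable given value~$1$ and cost $K=1$, any admissible choice of existential variables has total value at most~$1$, so at most one variable can be declared existential; on such instances $\vcsat$ degenerates to (essentially) validity of a DNF, a $\cnp$ question, and $\md[\Psi\wedge\varphi]$ cannot be $\Sp$-hard unless the hierarchy collapses. You sense the tension yourself---you say you will write $V,K$ via ``fixed numeric formulas dictated by $\varphi$'' and \emph{also} that you must ``encode $E$ into the value-and-cost mechanism''---but these two intentions are incompatible: if $V$ and $K$ are numerically pinned, there is no field left in which to record $E$, and the fop you sketch ($P\leftarrow Q$, $N\leftarrow M$, $V,K$ numeric) simply discards $E$.

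The paper's resolution is precisely to let $V$ carry $E$. Its universal sentence does not pin $V$ to a single value; it forces each variable's value to be either $1$ or $2^n-1$, and sets the cost to $2^{n-1}$. Then the variables of value $1$ are exactly those that can be chosen as existential (a single variable of value $2^n-1$ already exceeds the cost), while $n\le 2^{n-1}$ for $n\ge 2$ guarantees that \emph{all} value-$1$ variables may be chosen simultaneously. Under this $\varphi$ the fop from $\efsat$ sends $P\leftarrow Q$, $N\leftarrow M$, and writes $E$ into $V$ (value $1$ iff $E$ holds, value $2^n-1$ otherwise); the cost bookkeeping becomes vacuous and $\md[\Psi\wedge\varphi]$ is $\efsat$ in disguise. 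Two small remarks: a disjunction of two universal conditions is still expressible in $\forall\mathrm{FO}$ (since $(\forall y\,A(y))\vee(\forall z\,B(z))\equiv\forall y\forall z\,(A(y)\vee B(z))$), so ``value is $1$ or value is $2^n-1$'' is legitimately universal; and values in $\vcsat$ are attached to variables, not implicants.
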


\begin{proof} Given an instance of $\vcsat$ we want to know if it can be reconsidered as a positive instance of $\efsat$ by defining a set of existential variables under certain cost. 

With the following first order formulas, we make irrelevant the calculation of the total value of possible existential variables.
\begin{alignat*}{2}
\psi_1 & := \forall xy ((V(x,y) \leftrightarrow y=0) \oplus V(x,y)) \\
\psi_2 & := \forall x (K(x) \leftrightarrow x=\max)
\end{alignat*}  

The symbol $\oplus$ refers to the exclusive disjunction. If $\mathcal{A}$ is a $\tau$-structure and $n$ its cardinality, then
\begin{itemize}
\item $\mathcal{A} \models \psi_1$ iff the value of every Boolean variable is either 1 or $2^n-1$.
\item $\mathcal{A} \models \psi_2$ iff the cost is $2^{n-1}$.
\end{itemize}

The calculation of total value is irrelevant because with $n$ variables only those with value 1 can be consider to be existential (the rest have a value that exceed the actual cost by much) and even if the value of all the variables is 1, 
$$n \leq 2^{n-1}, \quad \textrm{for all } n \geq 2.$$

Consider the problem characterized by the conjunction $\Psi \wedge \psi_1 \wedge \psi_2$. An instance of this \textit{new} problem is not only an instance of $\vcsat$, it is one for which is quite clear to determine the set of existential variables. Once this set is defined, the only property that remains uncertain is $\efsat$. If we restrict the instances to those that satisfy $\psi_1 \wedge \psi_2$, problem $\mathrm{MOD}[\Psi \wedge \psi_1 \wedge \psi_2]$ is precisely $\efsat$, which is $\Sigma_2^p$-hard. By \textbf{Theorem \ref{theorem:superfluous}} $\vcsat$ is $\Sigma_2^p$-hard because $\psi_1\wedge\psi_2$ is universal. 

\qed 
\end{proof}

% {\color{red}
% And we can now affirm that the analysis in \textit{Example \ref{example:vcsat}} is a definitive argument to conclude the completeness of VC SAT, because sentence (\ref{eq:vcsat}) is universal. The same idea may be used to demonstrate that the value-and-cost version of a $\Sigma_2^p$-complete problem is also complete.}

\subsection{Superfluity in $\Pi_2^p$}

Although we already know that superfluity can be used in $\Sigma_2^p$, we can not conclude by duality that the method of superfluity can also be applied in $\Pi_2^p$, for which we need to construct a complete and universal family in this complexity class from scratch. However, the problem $(\cc)^c$ (the complement of $\cc$) is a good candidate to represent an universal problem. We already know that $(\cc)^c$ is $\Pi_2^p$-complete (in this case, a duality argument is valid). We need to ensure that there are sufficient instances in $(\cc)^c$. 

\begin{lemma}
$(\cc)^c$ is $(2k+5,k)$-universal for every $k \geq 1$.
\end{lemma}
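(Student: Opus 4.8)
The plan is to mimic the proof of Lemma~\ref{le:2cc} for $\cc$, but now verifying that a suitable graph \emph{fails} to be a certain-victory instance, which is exactly what membership in $(\cc)^c$ requires. As before, fix $k\geq 1$ and $m\geq 2k+5$, and take a consistent sequence of $m$-consistent $\sigma_g$-literals $L_1(u_1,v_1),\ldots,L_k(u_k,v_k)$; consistency means no loops and no index pair forcing both $E(u,v)$ and $\neg E(u,v)$. I want to exhibit one graph $\mathcal G$ on $m$ vertices satisfying all the $L_i$ that is a negative instance of $\cc$, i.e. one in which some maximal clique is monochromatic under every $2$-coloring, equivalently (recalling the definition of $\cc$) the Certain Victory condition fails.

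First I would split into the same two cases as Lemma~\ref{le:2cc}. If every literal is positive, the natural witnessing graph cannot be the complete graph (that one is in $\cc$), so instead I would take a graph that contains all the demanded edges but also contains a small \emph{isolated} gadget — using the slack $m\geq 2k+5$, there are at least five vertices untouched by the $\le 2k$ endpoints named in (\ref{eq:conditions}); among those spare vertices place, say, a triangle disjoint from everything else. A triangle is a maximal clique whose three vertices, in any $2$-coloring, must repeat a color on an edge; more carefully, I need a maximal clique that is forced monochromatic, so the cleanest choice is a single isolated edge or an isolated vertex together with the global structure — I would pick the gadget so that it is itself a maximal clique and has an odd/degenerate shape making a nonmonochromatic witness impossible. (An isolated $K_1$ is a maximal clique and is trivially monochromatic, which already kills membership in $\cc$; this is why only a couple of spare vertices are genuinely needed, and $2k+5$ is comfortably enough.) In the case where negative literals are present, I would start from the \emph{biggest} graph $\mathcal G$ satisfying (\ref{eq:negcond}), exactly as in Lemma~\ref{le:2cc}, and then delete edges so as to isolate a small fresh gadget among the $\ge 5$ untouched vertices, checking that no positive literal is violated by these deletions (the untouched vertices appear in no literal, so this is automatic) and that the resulting graph still has a monochromatic maximal clique.

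The key steps in order: (1) reduce to the consistent case and fix the $\le 2k$ "used" vertices; (2) locate at least five vertices outside that set, using $m\geq 2k+5$; (3) in both the all-positive and mixed cases build a graph realizing all literals in which a small isolated gadget on the spare vertices forms a maximal clique that no $2$-coloring can make nonmonochromatic; (4) conclude that this graph lies in $(\cc)^c$ and has cardinality $m$, establishing $m$-consistency in $(\cc)^c$; (5) invoke Lemma~\ref{lemma:universal} if needed to state the result uniformly for all $k\ge1$. The main obstacle is step~(3): I must ensure the gadget is genuinely a \emph{maximal} clique in the whole graph (nothing outside it is adjacent to all of it) while simultaneously not destroying any positively-demanded edge and not accidentally creating a nonmonochromatic-coloring escape route via the rest of the graph; getting the precise vertex budget right — why $2k+5$ rather than $2k+1$ or $2k+3$ — is exactly the bookkeeping that forces the "+5" in the statement, and pinning down the minimal gadget (isolated $K_1$ versus a larger forced-monochromatic configuration, depending on how $\cc$'s definition treats trivial maximal cliques) is where the real care lies.
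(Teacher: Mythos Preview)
Your overall architecture is right --- reserve vertices untouched by the $\le 2k$ endpoints named in the literals and plant on them a gadget that forces a monochromatic maximal clique --- but the specific gadgets you propose do not work, and that is exactly the missing idea. An isolated triangle is a maximal clique, yes, but it can be $2$-coloured nonmonochromatically (two red, one blue); the fact that some \emph{edge} of the triangle is monochromatic is irrelevant since those edges are not maximal cliques. An isolated edge \emph{is} a maximal clique, but it too can be coloured with two colours. An isolated vertex would work only if singletons count as maximal cliques required to be nonmonochromatic, and the paper's explicit claim that the five-cycle is the smallest member of $(\cc)^c$ tells you they do not. So none of your candidates forces the graph out of $\cc$.

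The gadget the paper uses is the five-cycle $C_5$: it is triangle-free, so each of its five edges is a maximal clique, and because $C_5$ is an odd cycle it admits no proper $2$-colouring --- hence in \emph{every} $2$-colouring some edge is monochromatic. That is precisely why the bound is $2k+5$ rather than $2k+1$ or $2k+3$. The paper also avoids your case split: it simply takes the \emph{smallest} graph on $m$ vertices satisfying all $k$ literals (edges only where a positive literal demands one), observes that the $\ge 5$ vertices not named by any literal are isolated there, attaches a $C_5$ on five of them, and notes that this $C_5$ is a connected component of the result, so its edges remain maximal cliques in the whole graph. Your worry about ``accidentally creating a nonmonochromatic-coloring escape route via the rest of the graph'' disappears once you see that the gadget is a separate component: whatever happens elsewhere, the restriction of any colouring to the $C_5$ already produces a monochromatic maximal clique.
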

\begin{proof}
Before attempting to prove that any sequence of $k$ consistent conditions over vocabulary $\sigma_g$ is consistent in $(\cc)^c$, notice that the \textit{smallest} graph (the graph with minimum nodes and edges) that satisfies $(\cc)^c$ is the five-node cycle. Exhaustively can be checked that any other graph with less nodes or edges is in $\cc$. 

Now, let $L_1(u_1,v_1),\ldots,L_k(u_k,v_k)$ be $m$-consistent literals (like in the proof of \textbf{Theorem \ref{le:2cc}}) and consider the smallest graph $\mathcal{G}$ on $m$ nodes that satisfies all $k$ conditions, that is, $\mathcal{G} = \langle m,E^{\mathcal{G}} \rangle$ where
\begin{equation*}
E^{\mathcal{G}} = \big\{ (u,v)\in m\times m: \{u,v\}=\{u_i,v_i\} \textrm{ for some }i=1,\ldots,k \textrm{ and } L_i=E \big\}.
\end{equation*}

$\mathcal{G}$ might not be an instance of $(2\textrm{CC})^c$ because with the $k$ conditions a positive instance of 2CC can be constructed, but notice that as $m \geq 2k+5$ there are always at least five free nodes. With this five nodes a five-node cycle $\mathcal{C}$ can be constructed and attached to $\mathcal{G}$ to create a new graph called $\mathcal{G}'$. The precise structure of this new graph is $\langle m,E^{\mathcal{G}'} \rangle$ where
\begin{equation*}
E^{\mathcal{G}'} = E^{\mathcal{G}} \cup E^{\mathcal{C}}.
\end{equation*}

As $\mathcal{C}$ is a maximal connected subgraph of $\mathcal{G}'$, there is no way to color the nodes of $\mathcal{G}'$ such that every maximal clique could be nonmonochromatic, because at least one of the edges of $\mathcal{C}$ (which turn out to be maximal cliques) is monochromatic for every coloration. \qed
\end{proof}

Using a similar argument as in the proof of \textbf{Theorem \ref{theorem:family}} we can conclude the following:

\begin{theorem}
$\mathcal{F}((\cc)^c)$ is a complete and universal family in $\Pi_2^p$.
\end{theorem}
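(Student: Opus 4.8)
The plan is to mirror, in the $\Pi_2^p$ setting, the chain of results already established for $\Sigma_2^p$ in \textbf{Theorem \ref{theorem:family}}. Concretely, to apply \textbf{Definition \ref{def:family}} to $\mathcal{F}((\cc)^c)=\{(\cc)^c_n\}_{n\geq 2}$ I need three things for each $n\geq 2$: that $(\cc)^c_n$ is $(n_k,k)$-universal along some sequence $\{n_k\}$, that $(\cc)^c_n\in\Pi_2^p$, and that $(\cc)^c_n$ is $\Pi_2^p$-hard. The universality is essentially already done: the preceding lemma shows $(\cc)^c$ is $(2k+5,k)$-universal, and by \textbf{Lemma \ref{lemma:universal}} together with the inclusion $(\cc)^c\subseteq (\cc)^c_n$ this passes up to every $(\cc)^c_n$, giving the sequence $n_k=2k+5$ (so $m=1$ works as the threshold in clause (2) of \textbf{Definition \ref{def:family}}, and $(\cc)^c_{n_k}$ trivially contains all structures of cardinality $<n_k$ by the very definition in (\ref{eq:sn})).

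For membership in $\Pi_2^p$, I would invoke \textbf{Lemma \ref{le:belong}} almost verbatim: the padding construction there takes a defining sentence $\Phi$ in $\textrm{SO}_k$ for $S$ and produces $\Phi\lor\big[\bigvee_{1\leq\ell<n}\varphi_\ell\big]$ defining $S_n$ in $\textrm{SO}_k$. The only point to check is that $(\cc)^c$, being $\Pi_2^p$-complete, has a defining sentence of the appropriate syntactic shape $\neg\Psi$ with $\Psi\in\textrm{SO}_2$; then $\neg\Psi\lor\big[\bigvee_\ell\varphi_\ell\big]$ is equivalent to $\neg\big(\Psi\land\bigwedge_\ell\neg\varphi_\ell\big)$, and since each $\varphi_\ell$ is first-order, $\Psi\land\bigwedge_\ell\neg\varphi_\ell$ is still in $\textrm{SO}_2$, so the padded problem sits in $\Pi_2^p$. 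The hardness is likewise a transcription of \textbf{Lemma \ref{le:2ccnHardness}}: the same padding fop $\rho_n$ sending a graph $\mathcal{G}$ to $k$ disjoint copies of itself satisfies $\mathcal{G}\in(\cc)^c\iff\rho_n(\mathcal{G})\in(\cc)^c_n$, because the copies operation preserves the set of maximal cliques up to duplication and colorings factor independently, so $\mathcal{G}$ is a positive 2CC instance iff $\rho_n(\mathcal{G})$ is, hence $\mathcal{G}\in(\cc)^c$ iff $\rho_n(\mathcal{G})\in(\cc)^c\subseteq(\cc)^c_n$ (and when $\|\rho_n(\mathcal{G})\|<n$ the image is in $(\cc)^c_n$ by padding, while that case only arises when $\mathcal{G}$ is small enough that the biconditional is still checked directly). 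Since $(\cc)^c$ is $\Pi_2^p$-complete and $\leq_{fop}$ is transitive, $(\cc)^c_n$ is $\Pi_2^p$-hard.

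Putting these together: every member of $\mathcal{F}((\cc)^c)$ is $\Pi_2^p$-complete and the family is $(n_k,k)$-universal in the sense of clause (2) of \textbf{Definition \ref{def:family}} with $n_k=2k+5$, so $\mathcal{F}((\cc)^c)$ is a complete and universal family in $\Pi_2^p$. I expect the only genuinely delicate point to be confirming that the padding sentence $\varphi_\ell$ construction interacts correctly with the \emph{negation} — i.e. verifying that a $\Pi_2^p$ problem padded by a first-order disjunct stays in $\Pi_2^p$ rather than escaping to some higher level — since here, unlike the $\Sigma_2^p$ case, one must push the extra clauses through a leading negation; everything else is a routine adaptation of the $\Sigma_2^p$ arguments. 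I would state the theorem's proof as: the universality lemma above, plus the $\Pi_2^p$-analogues of \textbf{Lemmas \ref{le:belong}} and \textbf{\ref{le:2ccnHardness}} obtained by the substitutions just described, plus the argument of \textbf{Theorem \ref{theorem:family}}.
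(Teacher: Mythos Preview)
Your proposal is correct and follows exactly the approach the paper intends: the paper's own proof is the single sentence ``Using a similar argument as in the proof of \textbf{Theorem \ref{theorem:family}} we can conclude the following,'' and you have simply spelled out that analogy (universality from the preceding lemma plus \textbf{Lemma \ref{lemma:universal}}, membership via the padding of \textbf{Lemma \ref{le:belong}} pushed through the negation, hardness via the same fop $\rho_n$ from \textbf{Lemma \ref{le:2ccnHardness}}). One minor remark: your parenthetical about the case $\|\rho_n(\mathcal{G})\|<n$ is unnecessary, since by the choice of $k$ with $2k>n$ and $\|\mathcal{G}\|\geq 2$ the image always has cardinality at least $n$.
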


\begin{theorem}
$\forall\mathrm{FO}$ is superfluous with respect to $\Pi_2^p$.
\end{theorem}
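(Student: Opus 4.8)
\textbf{Proof proposal for ``$\forall\mathrm{FO}$ is superfluous with respect to $\Pi_2^p$''.}

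The plan is to invoke Theorem~\ref{theorem:borges} exactly as was done for $\Sigma_2^p$ in Theorem~\ref{theorem:superfluous}. That theorem requires two ingredients: the logic $\mathcal{L}$ capturing $\Pi_2^p$ must contain $\mathrm{FO}$, which is immediate, and $\Pi_2^p$ must contain a complete and universal family $\mathcal{F}$. The previous three lemmas of this subsection supply exactly such a family, namely $\mathcal{F}((\cc)^c)$: the universality lemma gives that $(\cc)^c$, hence (by Lemma~\ref{lemma:universal} and $(\cc)^c\subseteq(\cc)^c_n$) each padded problem $(\cc)^c_n$, is $(2k+5,k)$-universal for every $k\geq 1$, while the family theorem $\mathcal{F}((\cc)^c)$ asserts that every member is $\Pi_2^p$-complete. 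So the proof is simply: apply Theorem~\ref{theorem:borges} with $\mathbf{C}=\Pi_2^p$ and $\mathcal{F}=\mathcal{F}((\cc)^c)$.

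The one point that deserves a line of justification is why $\Pi_2^p$ is a complexity class in the sense required by Theorem~\ref{theorem:borges} — that is, that it is captured by a logic $\mathcal{L}$ closed under disjunctions and under conjunctions with first-order formulas and containing $\mathrm{FO}$. This follows because $\Pi_2^p$ is captured by $\Pi\mathrm{SO}_2$, the negations of $\mathrm{SO}_2$ sentences (equivalently $\forall\mathrm{SO}$ sentences with one alternation); this class is closed under the required operations and contains $\mathrm{FO}$ since any first-order sentence can be padded with a vacuous universal second-order quantifier. Concretely, the indexing sequence $\{n_k\}$ in Definition~\ref{def:family} is taken to be $n_k=2k+5$ and the threshold $m$ is $1$, and the problem $S_{n_k}$ is $(\cc)^c_{n_k}$, which by construction contains every $\sigma_g$-structure of cardinality below $n_k$.

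I do not expect any genuine obstacle here: all the substantive work (the combinatorial universality argument involving attaching a five-cycle, the padding lemma for membership, and the fop from $(\cc)^c$ to $(\cc)^c_n$) has already been carried out in the two preceding theorems of this subsection by appeal to ``a similar argument as in the proof of Theorem~\ref{theorem:family}''. The only care needed is bookkeeping: making sure the duality argument for $\Pi_2^p$-completeness of $(\cc)^c$ is legitimate (it is, since $\leq_{fop}$-reductions relativize to complements) and that the universality bound $2k+5$ rather than $2k+1$ causes no trouble — it does not, since Definition~\ref{def:family} only asks for \emph{some} sequence $\{n_k\}$. Hence:

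\begin{proof}
By the previous theorem $\mathcal{F}((\cc)^c)$ is a complete and universal family in $\Pi_2^p$. Since $\Pi_2^p$ is captured by a logic $\mathcal{L}$ with $\mathrm{FO}\subseteq\mathcal{L}$, Theorem~\ref{theorem:borges} applies and yields that $\forall\mathrm{FO}$ is superfluous with respect to $\Pi_2^p$. \qed
\end{proof}
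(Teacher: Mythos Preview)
Your proposal is correct and follows exactly the same approach as the paper: the paper states this theorem immediately after establishing that $\mathcal{F}((\cc)^c)$ is a complete and universal family in $\Pi_2^p$, with the proof left implicit as a direct application of Theorem~\ref{theorem:borges} (mirroring how Theorem~\ref{theorem:superfluous} followed from Theorems~\ref{theorem:borges} and~\ref{theorem:family}). If anything, your write-up is more explicit than the paper's, which gives no proof text at all for this statement.
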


\section{Conclusions}\label{sec:Conclusions}

\textbf{Theorem \ref{theorem:borges}} is used in \cite{borges3} to prove that $\forall\textrm{FO}$ is superfluous with respect to the complexity classes {\bf NL}, {\bf P}, {\bf NP} and {\bf coNP}. We have enlarged that list proving that the superfluity method is also applicable in the complexity classes corresponding to the Second Level of the Polynomial-Time Hierarchy, $\Sigma_2^p$ and $\Pi_2^p$. 

As \textbf{Definition \ref{def:family}} is strongly semantical, there is still no generic proof of superfluity in every level of {\bf PH}, but we believe it is the case. One way to tackle this problem might be considering a sequence $\{A_k\}_k$ of $\Sigma_k^p$-complete problems, with an intrinsic relation on the vocabularies involved. $\cc$ is an appropriate problem to study the universal property because only one relation symbol is required to express it through second-order logic. It would be ideal that every problem in the sequence $\{A_k\}_k$ can be represented as easily as $\cc$. If we manage to generalize $\cc$ to a $\Sigma_k^p$-complete version that preserves universality for every $k$, we can consider superfluity in every level of {\bf PH} solved. 

Immerman-Medina conjecture is still a source for future investigation, since no other fragments of FO has been proven superfluous. A superfluity version of $\exists\textrm{FO}$ with respect to {\bf NP} might lead to an inductive proof of superfluity for FO. 

%
% ---- Bibliography ----
%

\appendix

\section{Problems in $\Sp$ referred to in this paper}\label{app:problems}
\begin{enumerate}
	\item{\sc $2-$Quantified Satisfiability} ($\efsat$)
	\begin{description}
	\item[Instance:]  A DNF Boolean formula $\phi(\mathbf{x},\mathbf{y})$, 
		where $\mathbf{x}$ and $\mathbf{y}$ are tuples of Boolean variables of length $n$ and 
		$m$ respectively.
	\item[Question:]Is there a vector $\mathbf{x} \in \{0,1\}^n$ such that for every vector 
		$\mathbf{y} \in \{0,1\}^m$, $\phi(\mathbf{x},\mathbf{y})$ is true?
	\item[Notes:] The vocabulary used to codify Boolean formulas (with existential variables) as finite structures is
			$\sigfnd=\tup{E^1,Q^2,M^2}$.	
	\end{description}
    
     \
    
	\item{\sc $2-$Quantified Unsatisfiability} ($\eunsat$)
	\begin{description}
	\item[Instance:]  A CNF Boolean formula $\phi(\mathbf{x},\mathbf{y})$, 
		where $\mathbf{x}$ and $\mathbf{y}$ are tuples of Boolean variables of length $n$ and 
		$m$ respectively.
	\item[Question:]Is there a vector $\mathbf{x} \in \{0,1\}^n$ such that for every vector 
		$\mathbf{y} \in \{0,1\}^m$, $\phi(\mathbf{x},\mathbf{y})$ is false?
	\item[Notes:]  The vocabulary used to codify Boolean formulas (with existential variables) as finite structures is
			$\sigfnc=\tup{E^1,P^2,N^2}$.	
	\end{description}
    
     \
        
	\item{\sc Unique Extension Satisfiability} ($\eunique$)
	\begin{description}
	\item[Instance:]  A CNF Boolean formula $\phi(\mathbf{x},\mathbf{y})$, 
		where $\mathbf{x}$ and $\mathbf{y}$ are tuples of Boolean variables of length $n$ and 
		$m$ respectively.
	\item[Question:]Is there a vector $\mathbf{x} \in \{0,1\}^n$ such that for an unique vector 
		$\mathbf{y} \in \{0,1\}^m$, $\phi(\mathbf{x},\mathbf{y})$ is true?
	\end{description}
    
     \
	
	\item {\sc $2-$Clique Coloring} ($2\textrm{CC}$)
	\begin{description}
	\item[Instance:] A simple graph $\mathcal{G} = \langle V,E \rangle$.
	\item[Question:] Is there a $2$-clique-coloring of $\mathcal{G}$, that is,
			a $2$-coloring of $V$ such that every maximal clique (complete subgraph) of $\mathcal{G}$ 
			is nonmonochromatic?
	\item[Notes:]The vocabulary used to codify graphs is naturally the one that consists of one 
		binary relation:
		$\sigma_g = \langle E^2 \rangle$. In some cases another binary symbol is used to avoid misunderstanding.
	\end{description}
    
     \
	
	\item {\sc Value-Cost Satisfiability} ($\vcsat$)
	\begin{description}
	\item[Instance:] A DNF Boolean formula $\phi$ on Boolean variables $x_1,\ldots,x_n$, 
			an integer $K$ and an integer value $v_i$ for each variable $x_i$. 
	\item[Question:] Is there a choice of Boolean variables with total value below $K$ and 
			such that $\phi$ along with that choice is a $\efsat$ instance?
	\end{description}
\end{enumerate}
\end{document}